\documentclass[11pt,twoside]{article}


\usepackage[ansinew]{inputenc} 
\usepackage{amsmath,amsfonts,amssymb,amsthm}
\usepackage{pstricks,pst-node,pst-coil,pst-plot,pstricks-add}
\usepackage{graphics,geometry,epsfig}
\usepackage{bbm}
\usepackage{floatflt}

\setlength{\oddsidemargin}{7mm} \setlength{\evensidemargin}{7mm}
\setlength{\topmargin}{-10mm} 
\setlength{\textheight}{9in} \setlength{\textwidth}{5.7in} 

\newcommand{\field}[1]{\mathbb{#1}}
\newcommand{\N}{\field{N}}
\newcommand{\R}{\field{R}}
\newcommand{\C}{\field{C}}

\newcommand{\HH}{\mathcal H}
\newcommand{\LL}{\mathcal L}
\newcommand{\EE}{\mathcal E}
\newcommand{\TT}{\mathcal T}


\newcommand{\eps}{\varepsilon}
\newcommand{\ph}{\varphi}

\newcommand{\ran}{\mathrm{Ran}}

\newcommand{\spa}{\mathrm{span}}


\newcommand{\restricted}{|\grave{}\,}
\newcommand{\disp}{\displaystyle}

\newcommand{\norm}[1]{\mbox{$\left\| #1 \right\|$}}           
\newcommand{\sprod}[2]{\mbox{$\left\langle #1,#2 \right\rangle$}}        


\newcommand{\dist}{\operatorname{dist}}

\newcommand{\tr}{\operatorname{tr}}

\newtheorem{theorem}{Theorem}[section]
\newtheorem{lemma}[theorem]{Lemma}
\newtheorem{corollary}[theorem]{Corollary}

\newtheorem{proposition}[theorem]{Proposition}
\theoremstyle{plain}

\title{Unique Solutions to Hartree-Fock Equations\\ for Closed Shell Atoms}

\author{
Marcel Griesemer and Fabian Hantsch\\
Universit\"at Stuttgart, Fachbereich Mathematik\\
70550 Stuttgart, Germany}

\date{}
\begin{document}
\maketitle
\begin{abstract}
In this paper we study the problem of uniqueness of solutions to the 
Hartree and Hartree-Fock equations of atoms.
We show, for example, that the Hartree-Fock ground state of
a closed shell atom is unique provided
the atomic number $Z$ is sufficiently large compared to the number $N$
of electrons. More specifically,
a two-electron atom with atomic number $Z\geq 35$ has a unique
Hartree-Fock ground state given by
two orbitals with opposite spins and identical spatial wave functions.
This statement is wrong for some $Z>1$, which exhibits a phase segregation. 
\end{abstract}

\section{Introduction}

The Hartree-Fock method is widely used in
quantum chemistry for approximate electronic structure
computations \cite{Helgaker}. In the simplest case the state of the electrons is
described by a single Slater determinant and one seeks to minimize the
energy by variation of the one-electron orbitals. 
This is done by some self-consistent
field algorithm such as the Roothaan, or the level-shift
algorithm \cite{CB2000}. While there has been remarkable progress recently in the
analysis of convergence properties of these algorithms \cite{CB2000, Cances2000}, it is
still poorly understood what is actually being approximated. To
some extent this is due to our ignorance about the set of critical
points of the Hartree-Fock functional: we know that the Hartree-Fock functional
for a neutral atom has a minimizer and infinitely many other critical points
\cite{LS1977, Lions1987}, but uniqueness of the
minimizer, e.g., is not known even in cases where it is expected.
Neither is it known whether distinct methods for
finding critical points lead to distinct critical points. Our goal
is to give answers to such questions.

In this paper we establish existence and uniqueness of solutions
to the Hartree-Fock equations for positively charged atoms with
prescribed filled shells. We consider both restricted and
unrestricted Hartree-Fock theory. In the unrestricted case a
filled shell refers to a set of $qn^2$ electrons which means that
we take the number of all electrons, $N$, of the form $N=q\sum_{k=1}^{s}n_k^2$ where $1\leq
n_1<n_2\ldots <n_{s}$ and where $q$ denotes the number of spin states.
In the restricted case our notion of a shell is the usual one and
hence all values of $N$ that occur in noble gas atoms are
admissible. For atoms with partially filled shells our uniqueness
results will not hold. Our method is based on a perturbation
argument which exploits the fact that in the limit of large atomic numbers, $Z\to\infty$,
the electron-electron interaction energy is negligible compared to
the total energy. This forces us to choose $Z$ much larger than
$N$, but thanks to a novel technique for comparing spectral
projections, values of $Z$ as small as 35 can be handled in the case of
two-electron atoms. Also, we provide an example which shows
that $Z>N-1$ is not sufficient for our results to hold in general.
As a byproduct of our methods we obtain uniqueness of
the minimizer of the $N$-electron Hartree functional for
sufficiently large $Z$. 

The Hartree-Fock equations for an atom with atomic number $Z>0$ and $N$ electrons are the set of equations 
\begin{equation}\label{HFeq}
 \left(-\Delta -\frac{2Z}{|x|}\right)\ph_i(x) +2\sum_{k=1}^N\int\frac{|\ph_k(y)|^2\ph_i(x)-\ph_k(x)\overline{\ph_k(y)}\ph_i(y)}{|x-y|} dy = \eps_i\ph_i(x)
\end{equation}
for $N$ functions $\ph_1,\ldots,\ph_N\in L^2(\R^3\times\{1,\ldots,q\})$ subject to the constraints
\begin{equation}\label{onb}
  \int \overline{\ph_i(x)}\ph_j(x) dx =\delta_{ij}
\end{equation}
and real numbers $\eps_1,\ldots,\eps_N$. Here and henceforth
$x,y,\ldots$ denote elements $(\mathbf{x},s)$ of $\R^3\times\{1,\ldots,q\}$, $\int
dx$ denotes integration with respect
to the product of Lebesgue and counting measure, and $|x-y|=|\mathbf{x}-\mathbf{y}|$. Of course $q=2$ for
(spin-$1/2$) electrons but for later convenience we allow arbitrary
$q\in\N$. We have chosen atomic units where $\hbar=1$, the mass $m$ of the electron equals $1/2$ 
and the Rydberg energy equals $1$. 

The following theorem is our main result on critical points of the unrestricted 
Hartree-Fock functional. An analog result on a restricted Hartree-Fock
functional is given in Theorem~\ref{thm:rHF}. 

\begin{theorem}\label{main1}
Let $q,N,s$ and $n_1,\ldots,n_{s}$ be positive integers with
$1\leq n_1<n_2<\ldots<n_{s}$ and $ N = q\sum_{k=1}^{s}n_k^2$. Suppose
that 
\begin{equation}\label{Z-bound}
   Z> \frac{1}{\Delta_s}(20N+8\sqrt{2N}),\qquad \Delta_s=n_s^{-2}-(n_s+1)^{-2}.
\end{equation}
Then the Hartree-Fock equations \eqref{HFeq} have a solution $\ph_1,\ldots,\ph_N$ with 
\begin{equation}\label{ev-bounds}
   \eps_1,\ldots,\eps_N \in \cup_{k=1}^{s} \big[-Z^2/n_k^{2} , -Z^2/(n_k+1)^{2} \big),
\end{equation}
and the space spanned by $\ph_1,\ldots,\ph_N$ is uniquely determined by condition \eqref{ev-bounds}. The orbitals $\ph_i$ may be chosen 
in the form 
\begin{equation}\label{special-form}
    \ph_{nlm\sigma}(\mathbf{x},\mu) = \frac{f_{nl}(|\mathbf{x}|)}{|\mathbf{x}|}Y_{lm}(\mathbf{x})\delta_{\sigma,\mu}
\end{equation}
where $n\in\{n_1,\ldots,n_{s}\}$, $0\leq l \leq n-1$, $-l\leq m\leq l$, $\sigma\in\{1,\ldots,q\}$, and each of these quadruples 
$(n,l,m,\sigma)$ occurs exactly once.
\end{theorem}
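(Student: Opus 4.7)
The natural approach is a Banach fixed-point argument formulated at the level of the one-particle density matrix. Encode a tentative solution of \eqref{HFeq}--\eqref{onb} by the orthogonal projection $\gamma=\sum_{i=1}^N|\ph_i\rangle\langle\ph_i|$ onto $\spa\{\ph_1,\ldots,\ph_N\}$, and rewrite \eqref{HFeq} as the linear eigenvalue equation $F(\gamma)\ph_i=\eps_i\ph_i$, where $F(\gamma)=-\Delta-2Z/|\mathbf{x}|+V_\gamma$ is the Fock operator and $V_\gamma$ is the combined direct-plus-exchange operator, linear in $\gamma$. Condition \eqref{ev-bounds} then amounts to saying that $\gamma$ is precisely the spectral projection of $F(\gamma)$ onto the union of intervals $I_Z:=\bigcup_{k=1}^s[-Z^2/n_k^2,-Z^2/(n_k+1)^2)$, so the whole problem collapses to finding a fixed point of
\[
\Phi\colon\gamma\mapsto\ID_{I_Z}\bigl(F(\gamma)\bigr)
\]
on a suitable set $\mathcal{K}$ of rank-$N$ orthogonal projections contained in the form domain of $-\Delta$.

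First I would check that $\Phi$ is well defined on $\mathcal{K}$. The spectrum of the unperturbed hydrogenic operator $h_Z=-\Delta-2Z/|\mathbf{x}|$ consists of the eigenvalues $-Z^2/n^2$ of multiplicity $qn^2$, and $I_Z$ captures precisely the shells $n_1,\ldots,n_s$ with a gap to the rest of the spectrum of at least $Z^2\Delta_s$. Using Hardy-type estimates one controls $V_\gamma$ as a form-perturbation of $-\Delta$ with a constant proportional to $N$, and then by min-max the eigenvalues of $F(\gamma)$ stay within a distance of order $N$ from the unperturbed ones; the hypothesis \eqref{Z-bound} is the exact statement needed to keep these perturbed eigenvalues strictly inside $I_Z$ (and to keep the eigenvalues of the unused shells strictly outside), so $\ID_{I_Z}(F(\gamma))$ is a well-defined rank-$N$ projection and $\Phi$ maps $\mathcal{K}$ into itself.

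The core of the argument is the contraction property of $\Phi$. I would represent the spectral projection as a Riesz contour integral and apply the second resolvent identity to obtain
\[
\Phi(\gamma)-\Phi(\gamma')=\frac{1}{2\pi i}\oint_{\Gamma}\bigl(z-F(\gamma)\bigr)^{-1}(V_\gamma-V_{\gamma'})\bigl(z-F(\gamma')\bigr)^{-1}\,dz,
\]
where $\Gamma$ is a contour encircling $I_Z$ at distance of order $Z^2\Delta_s$ from the spectra of both Fock operators. Bounding the resolvents by the inverse distance to the spectrum and estimating $\|V_\gamma-V_{\gamma'}\|$ in a suitable norm (direct part controlled via Hardy, exchange part via its Hilbert--Schmidt norm, which explains the $\sqrt{N}$ dependence) produces a Lipschitz factor essentially of the form $c(N+\sqrt{N})/(Z\Delta_s)$. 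The main obstacle is to push the numerical constants down to the values $20$ and $8\sqrt{2}$ appearing in \eqref{Z-bound}; this is where the "novel technique for comparing spectral projections" alluded to in the introduction must enter, presumably by estimating $\Phi(\gamma)-\Phi(\gamma')$ directly in a kinetic norm rather than through the operator norm on $L^2$, so as to avoid the usual loss coming from the length of $\Gamma$. Once the Lipschitz constant is $<1$, Banach's theorem gives a unique $\gamma\in\mathcal{K}$, hence existence and uniqueness of $\spa\{\ph_1,\ldots,\ph_N\}$.

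For the special form \eqref{special-form} I would invoke symmetry. The natural group $\mathrm{SO}(3)\times \mathrm{U}(q)$, acting simultaneously on the spatial and spin variables, leaves $-\Delta-2Z/|\mathbf{x}|$ and $V_\gamma$ covariant, so if $\gamma$ is the unique fixed point, $(R\otimes U)\gamma(R\otimes U)^{*}$ is one too, forcing $\gamma$ itself to be invariant. In each angular-momentum-spin sector $(l,\sigma)$ the restriction of $F(\gamma)$ then reduces to a radial operator; its eigenvalues $E_{n,l}(Z)$ perturb $-Z^2/n^2$ by $O(N)$, and by \eqref{Z-bound} this keeps $E_{n,l}\in I_Z$ iff $n\in\{n_1,\ldots,n_s\}$, independently of $l$ and $\sigma$. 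Collecting the resulting multiplets yields an orthonormal basis of $\ran\gamma$ of the form \eqref{special-form}, with the $f_{nl}$ the radial eigenfunctions of the reduced Fock operator on the $(n,l)$-sector.
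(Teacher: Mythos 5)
Your overall architecture coincides with the paper's: up to the trivial rescaling, you recast \eqref{HFeq} with \eqref{ev-bounds} as the fixed point problem \eqref{eq:fixpoint} for the rank-$N$ spectral projection, control the Fock eigenvalues perturbatively so the map is well defined, and derive the symmetry statements from uniqueness plus covariance (your $\mathrm{SO}(3)\times U(q)$ argument is a legitimate variant of the paper's, which instead builds a solution of the form $P'\otimes 1$ of a reduced equation and invokes uniqueness). The genuine gap is the contraction step, which is exactly where the content of the explicit bound \eqref{Z-bound} lies. You propose a Riesz contour integral with the second resolvent identity and then concede that the constants will not come out, deferring to an unspecified ``kinetic norm'' refinement; but that refinement is not a technicality, it is the paper's main new tool, and it is not a repaired contour argument at all. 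The paper proves (Proposition~\ref{lm-abstract}) that for self-adjoint $A,B$ satisfying the two-sided gap condition one has $\|\chi_\Omega(A)-\chi_\Omega(B)\|_2\le\delta^{-1}\big(\|(A-B)\chi_\Omega(A)\|_2^2+\|(A-B)\chi_\Omega(B)\|_2^2\big)^{1/2}$, obtained by writing $\|\chi_\Omega(A)-\chi_\Omega(B)\|_2^2$ as the sum of the two traces $\tr\big(\chi_\Omega(A)(1-\chi_\Omega(B))\chi_\Omega(A)\big)$ and its counterpart, and using on eigenfunctions $A\ph_k=\eps_k\ph_k$ the spectral inequality $1-\chi_\Omega(B)\le\delta^{-2}(B-\eps_k)^2$, so that $(B-\eps_k)^2$ turns into $(B-A)^2$. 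Combined with the Hardy-type bounds of Lemma~\ref{H1-estimates}, $\|(U_P-U_Q)\ph\|\le4\|P-Q\|_1\|\nabla\ph\|$ and $\|(K_P-K_Q)\ph\|\le4\|P-Q\|_2\|\nabla\ph\|$, with $\|P-Q\|_1\le\sqrt{2N}\,\|P-Q\|_2$ for differences of rank-$N$ projections, and with the kinetic bound $\|\sqrt{-\Delta}\,\chi_\Omega(H_P)\|_2^2\le4N$ coming from $-\Delta\le2H_P+4$ (Lemma~\ref{form-estimates}(iv)), this yields the Lipschitz constant $\frac{8}{\delta Z}\sqrt{2N}(1+\sqrt{2N})$ of Proposition~\ref{contraction}, which is precisely what makes \eqref{Z-bound} sufficient.

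As written, your route would not deliver Theorem~\ref{main1}. Beyond constants, the contour argument has structural losses: $V_\gamma-V_{\gamma'}$ is not a bounded operator (the direct part is only $|\nabla|$-bounded), so the integrand must be handled through weighted resolvents; the contour around your $I_Z$ has length of order $Z^2$ while its distance to the spectra is only of order $Z^2\Delta_s$, so the naive estimate costs an extra factor $\Delta_s^{-1}$ and produces a smallness condition of the type $Z\gtrsim N/\Delta_s^2$ with uncontrolled constants rather than $Z>\Delta_s^{-1}(20N+8\sqrt{2N})$; and to apply Banach's theorem on the complete metric space of rank-$N$ projections with the Hilbert--Schmidt metric you need the Lipschitz bound in $\|\cdot\|_2$, which an operator-norm contour estimate does not directly give. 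So your proposal proves at best a qualitative ``for $Z$ sufficiently large'' statement; the quantitative theorem requires supplying the missing comparison lemma (or an equivalent). A smaller point at the end: to obtain \eqref{special-form} with each quadruple $(n,l,m,\sigma)$ occurring exactly once, you should note that in each fixed $(l,m,\sigma)$ sector the hydrogenic eigenvalues $-1/n^2$, $n\ge l+1$, are simple, so by the eigenvalue estimates the reduced radial Fock operator has exactly one eigenvalue in each admissible interval; your multiplet counting glosses over this multiplicity issue.
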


The Hartree-Fock equations \eqref{HFeq} are equivalent to the Euler-Lagrange
equations of the Hartree-Fock functional
\begin{align}
\EE^{HF}(\ph_1,\ldots,\ph_N) =& \sum_{k=1}^{N} \int|\nabla\ph_k(x)|^2 - \frac{2Z}{|x|}|\ph_k(x)|^2\,dx\nonumber\\
    &+ 2\sum_{i<k}\int \frac{|\ph_i(x)|^2|\ph_k(y)|^2 - \overline{\ph_i(x)}\ph_i(y)\ph_k(x)\overline{\ph_k(y)}}{|x-y|}dxdy \label{HF-func}
\end{align}
where $\ph_1,\ldots,\ph_N \in L^2(\R^3\times\{1,\ldots,q\})$ are
subject to the constraints~\eqref{onb}. In fact, the Euler-Lagrange equations of \eqref{HF-func} are 
equations of the form \eqref{HFeq} with $\eps_i\ph_i$ replaced by the more general
term $\sum_{j=1}^N\lambda_{ij}\ph_j$,  the
coefficients $\lambda_{ij}$ being Lagrange multipliers. These generalized Hartree-Fock equations, as well as 
the Hartree-Fock functional and the constraints 
\eqref{onb} are invariant 
with respect to transformations $(\ph_1,\ldots,\ph_N)\mapsto (\tilde\ph_1,\ldots,\tilde\ph_N)$ of the form
$\tilde\ph_i = \sum_{j=1}^N U_{ij}\ph_j$
where $(U_{ij})$ denotes any unitary $N\times N$ matrix. By choosing
this matrix to diagonalize the self-adjoint matrix $(\lambda_{ij})$,
the equations \eqref{HFeq} emerge. This also means that the Hartree-Fock functional only depends on the $N$-dimensional 
subspace of $L^2(\R^3\times\{1,\ldots,q\})$ spanned by $\ph_1,\ldots,\ph_N$
or on the orthogonal projection 
\begin{equation}\label{proj}
P = \sum_{k=1}^N |\ph_k\rangle\langle \ph_k|
\end{equation}
onto this space. As a function of $P$ the Hartree-Fock functional is
quadratic and its domain can be extended to a convex set without
lowering the minimum \cite{Lieb1981, Bach1992}. Nevertheless the Hartree-Fock functional is not
convex due to the presence of the negative exchange term - the second
term in the numerator of \eqref{HF-func}. The convex
functional obtained by dropping the exchange term is called reduced
Hartree-Fock functional.

We are now in position to derive Theorem~\ref{main2}, below, on the
uniqueness of the minimizer of the Hartree-Fock functional \eqref{HF-func}. Suppose
we choose $n_1=1,n_2=2,\ldots$ in Theorem~\ref{main1}. Then, by
Proposition~\ref{eigenvalues}, for $Z$ sufficiently large
the condition \eqref{ev-bounds} becomes equivalent to the statement that 
$\eps_1,\ldots,\eps_N$ are the lowest $N$ eigenvalues of the Fock
operator, which is the linear operator acting on $\ph_i$ on the left hand side of \eqref{HFeq}. 
This condition on
$\eps_1,\ldots,\eps_N$ is satisfied for any solution of
\eqref{HFeq} associated with a minimizer of the Hartree-Fock
functional \cite{LS1977}. Hence the Theorem~\ref{main1} implies the following
theorem with the exception of the bound on $Z$.


\begin{theorem}\label{main2}
Let $s$, $q$ and $N$ be positive integers with
$N=q\sum_{n=1}^{s}n^2$ and suppose that
$Z>\Delta_s^{-1}(12N+4\sqrt{2N}-4)$ where $\Delta_s=s^{-2}-(s+1)^{-2}$.
Then the minimizer of the Hartree-Fock functional is unique in the
sense of a unique projection \eqref{proj}. It may be represented by $N$ orbitals $\ph_1,\ldots,\ph_N$
of the form \eqref{special-form} satisfying the Hartree-Fock equations \eqref{HFeq}. In particular, the density 
$\sum_{k=1}^N|\ph_k(x)|^2$ is unique and spherically symmetric.
\end{theorem}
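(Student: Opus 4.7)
The plan is to reduce Theorem~\ref{main2} to Theorem~\ref{main1} by combining the first-order optimality conditions for a Hartree-Fock minimizer with the spectral comparison provided by Proposition~\ref{eigenvalues}. Starting from a minimizer $\{\ph_1,\ldots,\ph_N\}$ of $\EE^{HF}$ (whose existence is classical), I would invoke the theorem of Lieb and Simon \cite{LS1977} to unitarily rotate the orbitals within their span so that they satisfy \eqref{HFeq} with Lagrange multipliers $\eps_1,\ldots,\eps_N$ equal to the $N$ lowest eigenvalues of the associated Fock operator. If these $\eps_i$ lie in the union of intervals specified by \eqref{ev-bounds}, then Theorem~\ref{main1} forces $P=\sum_k|\ph_k\rangle\langle\ph_k|$ to coincide with the unique projection produced there, and also supplies the representation \eqref{special-form}. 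The spherical symmetry of the density $\sum_k|\ph_k(x)|^2$ would then follow from Unsöld's identity $\sum_{m=-l}^{l}|Y_{lm}|^2=(2l+1)/(4\pi)$ applied shell by shell.

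The central step is to verify, under the hypothesis on $Z$, that for every admissible projection the lowest $N$ eigenvalues of the Fock operator $F_P=-\Delta-2Z/|x|+W_P$ really do belong to $\bigcup_{k=1}^{s}[-Z^2/k^2,-Z^2/(k+1)^2)$. This is exactly the information that Proposition~\ref{eigenvalues} should provide: a quantitative comparison of the Fock spectrum to the hydrogenic spectrum $-Z^2/n^2$ with degeneracies $qn^2$. Because $N=q(1^2+\cdots+s^2)$ matches the cumulative multiplicity through shell $s$, a min-max argument would localize the first $N$ Fock eigenvalues in the desired intervals and push the $(N+1)$-th above $-Z^2/(s+1)^2$, thereby activating Theorem~\ref{main1}.

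The hard part, and the only reason the two theorems are not literally identical, is the sharper threshold $Z>\Delta_s^{-1}(12N+4\sqrt{2N}-4)$ in place of the $Z>\Delta_s^{-1}(20N+8\sqrt{2N})$ that a naive appeal to Theorem~\ref{main1} would require. I expect this improvement to come from a refined perturbation estimate on $W_P$ that is specific to the consecutive, completely-filled shell configuration: bounding the direct part of $W_P$ via Newton's theorem against a spherically symmetric majorant and controlling the exchange part by a Kato-type inequality, one obtains constants tighter than those underlying Theorem~\ref{main1}. Tracking these constants through Proposition~\ref{eigenvalues} and arranging them into the form $12N+4\sqrt{2N}-4$ is the main computational burden, and it is where I would expect most of the real work in the proof to be concentrated.
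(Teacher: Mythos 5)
Your overall skeleton matches the paper's: existence of a minimizer and the fact that its orbitals may be rotated within their span so as to satisfy \eqref{HFeq} with the lowest $N$ eigenvalues of the Fock operator are taken from \cite{LS1977}; Proposition~\ref{eigenvalues} localizes these eigenvalues in $\cup_{k=1}^{s}[-n_k^{-2},-(n_k+1)^{-2})$, which for the consecutive shells $n_k=k$ is exactly condition \eqref{ev-bounds}; and then the uniqueness mechanism of Theorem~\ref{main1} (together with its orbital structure \eqref{special-form} and Uns\"old-type spherical symmetry of the density) takes over.

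The genuine gap is in your last paragraph. You leave the passage from the threshold $\Delta_s^{-1}(20N+8\sqrt{2N})$ of Theorem~\ref{main1} to $\Delta_s^{-1}(12N+4\sqrt{2N}-4)$ as an expectation, and the mechanism you conjecture for it --- a Newton's-theorem majorant for the direct term and a Kato-type bound for the exchange term, ``specific to the filled-shell configuration'' --- is not what produces the improvement; the filled-shell structure plays no role there. In the paper the smaller constant comes from the fact that the projections being compared are minimizers, hence solutions of \eqref{eq:fixpoint}, via two refinements that are already proved: (i) Proposition~\ref{eigenvalues}(b), whose proof decomposes $\EE^{HF}$ through the $(N-1)$-particle Fock operator $H_{P,N-1}$ and gives the eigenvalue shift $2\tfrac{N-1}{Z}\bigl(1+\sqrt{E_N^\infty+1}\bigr)\leq 4(N-1)/Z$, i.e.\ $N-1$ in place of $N$ in the gap condition $\delta=\Delta_s-4(N-1)/Z>0$; and (ii) the remark after Proposition~\ref{contraction} that the Lipschitz factor $8$ may be replaced by $4$ when both projections satisfy the Hartree-Fock equation, which follows from the virial-theorem identity $\norm{\sqrt{-\Delta}\,\chi_\Omega(H_P)}_2^2=|\EE^{HF}(P)|\leq N$ instead of the generic bound $4N$ (note that for uniqueness one only needs the estimate for two solutions of \eqref{eq:fixpoint}, not a contraction on all of $S_{N,q}$, so the factor $4$ is legitimately available). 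Combining $\tfrac{4}{Z}(1+\sqrt{2N})\sqrt{2N}<\Delta_s-\tfrac{4(N-1)}{Z}$ yields exactly $Z>\Delta_s^{-1}(12N+4\sqrt{2N}-4)$. As written, your argument establishes the theorem only under the stronger hypothesis \eqref{Z-bound} of Theorem~\ref{main1}; to reach the stated bound your speculative refinement of $W_P$ should be replaced by (i) and (ii).
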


For $N=q=2$ and $Z\geq 35$ the Theorem~\ref{main2} says that the Hartree-Fock functional has a unique 
minimizer given by two one-electron orbitals with opposite spins and equal spatial wave function $\ph\in L^2(\R^3)$, 
$\ph$ being the unique minimizer of the restricted Hartree functional 
$\ph\mapsto \EE^H(\ph,\ph)$, where $\EE^H$ is defined in
\eqref{H-func}, below.  In particular, if $q=2$ and $Z\geq 35$ then $\inf\EE^{HF}(\ph_1,\ph_2)\geq
\inf \EE^{H}(\ph,\ph)$. This statement is false for $Z<1.0268$ (see the remark after Theorem~\ref{thm:H}), which proves the necessity of some lower 
bound on $Z$ other than $Z>N-1$.

The (unrestricted) $N$-particle Hartree functional $\EE^{H}$ is defined on arbitrary $N$ tuples of functions 
$\ph_k\in H^1(\R^3)$ with $\int |\ph_k(x)|^2dx=1$, and it is given by 
\begin{align}
   \EE^H(\ph_1,\ldots,\ph_N) =& \sum_{k=1}^N \int |\ph_k(x)|^2 - \frac{2Z}{|x|}|\ph_k(x)|^2 dx\nonumber\\
 &+2\sum_{i<k}\int \frac{|\ph_i(x)|^2|\ph_k(y)|^2}{|x-y|}dxdy.\label{H-func}
\end{align}
No orthogonality is assumed on $\ph_1,\ldots,\ph_N$. It is well-known that $\EE^H$ 
has a minimizer if $Z>N-1$ and that minimizing orbitals are pointwise positive by a suitable choice of their phase \cite{LS1977}.
To establish uniqueness of the minimizer we consider $\EE^H$ as a spin-restricted Hartree-Fock functional with $q=N$, 
the spin-restriction being that $\phi_k(\mathbf{x},s)=\ph_k(\mathbf{x})\delta_{k,s}$. Then $\EE^{HF}(\phi_1,\ldots,\phi_N)=\EE^H(\ph_1,\ldots,\ph_N)$, 
and the following theorem, with the exception of the bound on $Z$, follows from Theorem~\ref{main2} with $q=N$.

\begin{theorem}\label{thm:H}
Let $N\in\N$ and suppose that $Z> 3^{-1}(40N+16\sqrt{2N}-8)$. Let 
$\ph_1,\ldots,\ph_N$ be any minimizer of the Hartree functional \eqref{H-func}. 
Then up to phases, $\ph_k=\ph$ for all $k$ where $\ph$ is the unique, 
positive minimizer of the restricted Hartree functional $\ph\mapsto \EE^{H}(\ph,\ldots,\ph)$.
\end{theorem}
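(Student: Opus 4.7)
Following the remark immediately preceding the theorem, the plan is to identify $\EE^H$ with a spin-restricted Hartree--Fock functional at $q=N$ and then invoke Theorem~\ref{main2} with $s=1$, $n_1=1$. Given admissible $\ph_1,\ldots,\ph_N\in H^1(\R^3)$ with $\int|\ph_k|^2=1$, set $\phi_k(\mathbf{x},s):=\ph_k(\mathbf{x})\delta_{k,s}$. The $\phi_k$ are automatically orthonormal in $L^2(\R^3\times\{1,\ldots,N\})$, and for $i\neq k$ the exchange integral $\int\overline{\phi_i(x)}\phi_i(y)\phi_k(x)\overline{\phi_k(y)}/|x-y|\,dx dy$ vanishes because $\phi_i$ and $\phi_k$ have disjoint spin support. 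Hence $\EE^{HF}(\phi_1,\ldots,\phi_N)=\EE^H(\ph_1,\ldots,\ph_N)$, which gives $\inf\EE^{HF}\leq\inf\EE^H$.

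By Theorem~\ref{main2} applied with these parameters, the HF minimizer is unique as a projection and may be represented by orbitals of the form \eqref{special-form}, namely $\phi_{100\sigma}(\mathbf{x},\mu)=\ph(\mathbf{x})\delta_{\sigma,\mu}$ with the single spatial profile $\ph(\mathbf{x})=f_{10}(|\mathbf{x}|)|\mathbf{x}|^{-1}Y_{00}(\mathbf{x})$. Consequently $\inf\EE^{HF}=\EE^H(\ph,\ldots,\ph)\geq\inf\EE^H$, the two infima agree, and any Hartree minimizer $(\ph_1,\ldots,\ph_N)$ yields an HF minimizer after polarization whose associated projection must be the one described by Theorem~\ref{main2}. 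Expanding $\phi_k=\sum_\sigma c_{k\sigma}\ph(\cdot)\delta_{\sigma,\cdot}$ and using that $\phi_k$ is supported only at $s=k$ forces $c_{k\sigma}=\delta_{k\sigma}c_{kk}$ with $|c_{kk}|=1$, so $\ph_k=c_{kk}\ph$. Uniqueness and positivity of $\ph$ as the minimizer of the restricted Hartree functional $\ph\mapsto\EE^H(\ph,\ldots,\ph)$ follow from standard one-orbital arguments (see \cite{LS1977}) exploiting convexity in the density $|\ph|^2$.

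What remains is the quantitative bound on $Z$: a direct application of Theorem~\ref{main2} with $\Delta_1=3/4$ yields only $Z>(1/3)(48N+16\sqrt{2N}-16)$, whereas the stated bound is $Z>(1/3)(40N+16\sqrt{2N}-8)$, tighter by $(8N-8)/3$ for $N\geq 2$. I would recover this sharper constant by rerunning the perturbation argument behind Theorem~\ref{main2} with the exchange kernel set identically to zero, which is precisely the situation after spin polarization. Concretely, the exchange contribution to the operator-norm estimates on the nonlinear part of the Fock operator enters as a separately bounded summand; removing it should trim the coefficient of $N$ by $2$ and shift the additive constant from $-4$ to $-2$ in the inner factor $\Delta_s^{-1}(\cdots)$, matching the claim. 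This bookkeeping step, rather than any new functional-analytic ingredient, is the main and only obstacle beyond the reduction above.
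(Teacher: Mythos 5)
Your reduction is exactly the paper's: polarize via $\Phi_k(\mathbf{x},s)=\ph_k(\mathbf{x})\delta_{ks}$, note $\EE^{HF}(\Phi_1,\ldots,\Phi_N)=\EE^H(\ph_1,\ldots,\ph_N)$, apply Theorem~\ref{main2} with $q=N$, $s=1$ to get a unique HF minimizer $P=P'\otimes 1$ with $P'$ a rotation-invariant rank-one projection, and conclude $\ph_k=\ph$ up to phases. That part is sound and matches the paper, which leaves the identification of the $\ph_k$ slightly more implicit than you do.

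The genuine gap is the bound on $Z$, and your proposed way of closing it is wrong in mechanism. The paper does \emph{not} obtain the improvement by discarding the exchange term. Both competitors in the uniqueness argument are minimizers of the full unrestricted HF functional with $q=N$; a priori they need not be spin-polarized, and in any case the exchange operator $K_P$ of a spin-polarized state is not zero, so $K_P-K_Q$ still enters $H_P-H_Q$ and is estimated via Lemma~\ref{H1-estimates}~(ii). In the contraction condition this exchange contribution is precisely the ``$1$'' in the factor $(1+\sqrt{2N})$, i.e.\ it produces the $\tfrac{16}{3}\sqrt{2N}$ term of the final bound; deleting it would alter the $\sqrt{N}$-term, not trim the coefficient of $N$ from $16$ to $\tfrac{40}{3}$ as you need. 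The actual source of the improvement is Proposition~\ref{eigenvalues}~(b) specialized to $s=1$: there $E_N^\infty=-1$, so the term $2\tfrac{N-1}{Z}\sqrt{E_N^\infty+1}$ vanishes and the upward shift of the $N$-th eigenvalue is only $2\tfrac{N-1}{Z}$ instead of the $4\tfrac{N-1}{Z}$ used in the generic Theorem~\ref{main2} bound. This yields the gap $\delta=\tfrac34-2\tfrac{N-1}{Z}$, and combining it with the factor-$4$ version of Proposition~\ref{contraction} (both minimizers satisfy \eqref{eq:fixpoint}, so the virial bound $\|\sqrt{-\Delta}\chi_\Omega(H_P)\|_2^2\le N$ applies), the condition $\tfrac4Z(1+\sqrt{2N})\sqrt{2N}<\delta$ gives exactly $Z>\tfrac13(40N+16\sqrt{2N}-8)$. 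As written, your ``bookkeeping step'' would not reproduce the stated constant, so the quantitative half of the theorem remains unproved in your argument.
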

\begin{figure}[h]
\begin{center}
\begin{tabular}{|c|c|c|c|c|c|c|c|c|} 
\hline
$N$ & $2$  & $3$  & $4$  & $5$  & $6$  & $7$   & $8$   & $9$ \\ \hline
$Z_c$   & $35$ & $51$ & $66$ & $81$ & $96$ & $111$ & $126$ & $140$\\ \hline
\end{tabular}
\caption{The Hartree minimizer is unique for $Z\geq Z_c$.}
\end{center}
\end{figure}

In general, the minimum of $\EE^H$ is strictly below the minimum of
the restricted Hartree functional $\EE^{H}(\ph,\ph)$. 
In fact, Ruskai and Stillinger show that for $Z<1.0268$ and $N=2$ the restricted Hartree functional is bounded below by $-1$, while for 
$Z>1$ it is straightforward to show that the unrestricted Hartree
functional has its  minimum below $-1$ \cite{RuskStill}. A similar phenomenon of phase segregation 
is described in \cite{Aschbacher2009} for the two-electron Hartree functional with a confining external potential.

We now briefly sketch our strategy for proving Theorem~\ref{main1}. 
First, we use the well-known fact that $N$ functions
$Z^{3/2}\ph_i(Z\mathbf{x},\sigma)$, $i=1,\ldots,N$, $\ph_i\in
H^2(\R^3\times\{1,\ldots,q\})$, form a solution to the Hartree-Fock
equations \eqref{HFeq} if and only if $\ph_1,\ldots,\ph_N$ is a
solution to the new set of equations
\begin{equation}\label{HFeq2}
 \left(-\Delta -\frac{2}{|x|}\right)\ph_i(x) +
\frac{2}{Z}\sum_{k=1}^N\int\frac{|\ph_k(y)|^2\ph_i(x)-\ph_k(x)\overline{\ph_k(y)}\ph_i(y)}{|x-y|} dy = Z^{-2}\eps_i\ph_i(x).
\end{equation}
These rescaled Hartree-Fock equations clearly exhibit the perturbative
nature of the electron-electron repulsion in the large $Z$
limit. Let $H_P$ denote the rescaled Fock operator on the left hand
side in \eqref{HFeq2}, see \eqref{HF-operator}, the index $P$ being the 
projection onto $\spa\{\ph_1,\ldots,\ph_N\}$.
For $Z$ sufficiently large, $H_P$ has exactly $N$ eigenvalues,
counting multiplicities, near the eigenvalues $-n_k^{-2}$, $k=1,\ldots,s$, of $-\Delta-2/|x|$. 
More precisely, if $d$ is smaller than the gap $n_s^{-2}-(n_s+1)^{-2}$
and if $Z$ is sufficiently large, then the spectrum of $H_P$ in
$\Omega:=\cup_{k=1}^{s}[-n_k^{-2},-n_k^{-2}+d]$ consists of $N$ eigenvalues. 
The Hartree-Fock equations subject to \eqref{ev-bounds} are therefore equivalent to 
\begin{equation}\label{eq:fixpoint}
        P = \chi_{\Omega}\big(H_P\big).
\end{equation}
Here we use that all shells are filled. By making $Z$ even larger, if necessary, we can achieve that 
$P\mapsto \chi_{\Omega}\big(H_P\big)$ becomes a contraction, and hence that \eqref{eq:fixpoint} has a unique solution. 
This strategy obviously requires good eigenvalue estimates for $H_P$
and good control of $\chi_{\Omega}\big(H_P\big)-\chi_{\Omega}\big(H_Q\big)$ in terms of
$P-Q$. Concerning the second point we develop a novel method that yields much better bounds than, e.g.,
a resolvent integral for $\chi_{\Omega}\big(H_P\big)$ would give. 

Our work was inspired by the paper of Huber and Siedentop on solutions
of the Dirac-Fock equations  \cite{HubSied2007}. Using the contraction principle, they solve an equation 
analog to \eqref{eq:fixpoint} with the Laplacian replaced by the Dirac operator. 
We expect that our methods would allow to improve the results in \cite{HubSied2007}.
Minimization problems for semi-bounded Dirac-Fock type functionals are
studied in \cite{HLS}, and in the translation invariant case (no external potential) 
the minimizer is shown to be unique. When an external Coulomb
potential is present the method of \cite{HLS} does not seem to
work. For results on absence of a Hartree-Fock minimizer see \cite{Lieb1984,
Solovej2003}. Existence and uniqueness of radial solutions to Hartree equations are derived in \cite{Reeken1970}.
Uniqueness results concerning Hartree equations with \emph{attractive} Coulomb interactions
 are established, e.g., in \cite{Lenzmann2009, AFGST}. Several of the aforementioned results have been extended to pseudo-relativistic
Hartree-Fock functionals and to Hartree-Fock functionals including a
magnetic field \cite{AquaSolo2010, MelEns2008a, MelEns2008b,
  MelEns2009}. Last but not least we should mention the fundamental paper
of Bach on the accuracy of the Hartree-Fock approximation to
the quantum mechanical ground state energy \cite{Bach1992}.

This paper is organized as follows. In Section~2 we introduce all notations and we prove eigenvalue bounds for 
Fock operators. Section~3 contains the proofs of all theorem in this introduction.
In Section~4 we establish a theorem analog to Theorem~\ref{main1} in restricted HF-theory. Here, for proving existence of solutions 
we use the Schauder-Tychonoff theorem, which does not require as large
values of $Z$ as the contraction property does.
\\

\emph{Acknowledgement.} M.G. thanks Alex Elgart, George Hagedorn and
Timo Weidl for useful and inspiring discussions, Mathieu Lewin for explaining to him the uniqueness proof in \cite{HLS},
and I.M. Sigal for hospitality at the University of Toronto, where
part of this work was done. F.~H. is supported by the
\emph{Studienstiftung des Deutschen Volkes}.
\section{Notations and Eigenvalue Estimates}

In this section we collect the operator- and eigenvalue estimates
needed in later sections. We also introduce the definitions and notations used throughout the paper.

The sets of bounded linear operators, of Hilbert-Schmidt operators,
and of trace class operators in a separable Hilbert space $\HH$ are denoted by
$\LL(\HH)$, $\TT_2(\HH)$ and $\TT_1(\HH)$, respectively. The
corresponding norms are $\|\cdot\|$, $\|\cdot\|_2$, and
$\|\cdot\|_1$. Recall that 
$\TT_1(\HH)\subset \TT_2(\HH)\subset \LL(\HH)$ and that $\|K\|\leq
\|K\|_2\leq \|K\|_1$ for all $K\in \TT_1(\HH)$. Let 
\begin{equation*}
     S_{N,q} := \Big\{P\in \TT_1(L^2(\R^3 \times\{1,\dots,q\})) \Big|\, 0\leq P\leq 1,\ \tr P=N\Big\}.
\end{equation*}
The only reason for working with $S_{N,q}$ rather than with 
the set of self-adjoint projections of rank $N$, is that we need $S_{N,q}$ to be convex in Section~\ref{section:RHF}.

For each $P\in S_{N,q}$ there is a unique square-integrable kernel
$\tau=\tau_P$ such that $P\ph(x)=\int \tau(x,y)\ph(y) dy$ and there is a unique way to associate a 
density $\rho=\rho_P\in L^1$ with $P$, see Lemma~\ref{trace-formula}. If $P=\sum_n\lambda_n|\ph_n\rangle\langle\ph_n|$, then 
\begin{eqnarray}
\tau(x,y) & = & \sum_{n\geq 0} \lambda_n \ph_n(x)\overline{\ph_n(y)}, \label{def-tau} \\
\rho(x) & = & \sum_{n\geq 0}  \lambda_n| \ph_n(x)|^2 \label{def-rho}.
\end{eqnarray}
For given $P\in S_{N,q}$ we define a Fock operator
$H_P$ in $L^2(\R^3 \times\{1,\dots,q\})$ by 
\begin{equation}\label{HF-operator}
    H_P = -\Delta - V + \frac{1}{Z}(U_P - K_P),
\end{equation}
where $V$, $U_P$ are the multiplication operators associated with the
real-valued functions $V(x)=2/|x|$ and $U_P=\rho*V$, and for $\psi\in H^2$,
$$
  (K_P \psi)(x) := \int V(x-y)\tau(x,y)\psi(y)\,dy. \label{def K_P}
$$
From the fact that $V$ has a positive Fourier transform it is easy to see, using \eqref{def-tau}, 
that $K_P\geq 0$, and $U_P-K_P\geq 0$ by a straightforward computation. Hence $H_P\geq -\Delta-V\geq -1$.
Moreover, by the Kato-Rellich theorem, $H_P$ is self-adjoint 
on the Sobolev space $H^2(\R^3 \times\{1,\dots,q\})$.

By remarks in the introduction we may consider $\EE^{HF}$ as a function on $S_{N,q}\cap\{P^2=P\}$ given by
$\EE^{HF}(P)=\EE^{HF}(\ph_1,\ldots,\ph_N)$ where $(\ph_1,\ldots,\ph_N)$ is any orthonormal basis of $\ran P$.
Explicitely,
$$
  \EE^{HF}(P) = \tr\big[(-\Delta-V)P\big]+ \frac{1}{2Z}\int \big(\rho(x)\rho(y)-|P(x,y)|^2\big)V(x-y)\,dxdy.
$$

\begin{lemma}
\label{form-estimates}
Let $P \in S_{N,q}$. Then for all $\eps >0$
\begin{itemize}
\item[(i)] $\disp V \leq -\eps \Delta + \eps^{-1}$,
\item[(ii)] $\disp U_P \leq N\left( -\eps \Delta + \eps^{-1} \right)$,
\item[(iii)] $\disp\frac{1}{Z}U_P \leq \eps\left(-\Delta-V \right) + \frac{1}{\eps}\left(\eps+\frac{N}{Z}\right)^2$,
\item[(iv)] $\disp -\Delta \leq \frac{1}{1-\eps}H_P + \frac{1}{\eps(1-\eps)}\quad\text{if}\quad\eps\in (0,1)$.  
\end{itemize}
\end{lemma}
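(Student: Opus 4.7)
The plan is to derive all four bounds from a single ingredient, the hydrogen ground-state bound $-\Delta - V \geq -1$ (which holds because the Rydberg energy equals $1$ in these units), together with translation invariance of $-\Delta$ and a brief two-parameter optimization.

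For (i), I would rewrite the hydrogen bound as $V \leq -\Delta + 1$ and introduce the parameter $\eps$ by applying this operator inequality to the dilation $\psi_\lambda(x) := \lambda^{3/2}\psi(\lambda x)$, which satisfies $\|\psi_\lambda\|^2 = \|\psi\|^2$, $\|\nabla\psi_\lambda\|^2 = \lambda^2\|\nabla\psi\|^2$, and $\langle \psi_\lambda, V\psi_\lambda\rangle = \lambda \langle\psi, V\psi\rangle$; setting $\eps = \lambda$ produces (i). For (ii), I write $U_P(x) = \int V(x-y)\rho(y)\,dy$ and use $\int\rho = \tr P = N$. Since $-\Delta$ is translation invariant, each translated Coulomb potential $V(\,\cdot\, - y)$ satisfies (i), so averaging the pointwise inequality against the positive measure $\rho(y)\,dy$ yields (ii).

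For (iii), combining (i) at a parameter $\delta'$ with (ii) at a parameter $\delta$ gives
\[
\frac{1}{Z} U_P + \eps V \leq -\Bigl(\frac{N\delta}{Z} + \eps\delta'\Bigr)\Delta + \frac{N}{Z\delta} + \frac{\eps}{\delta'}.
\]
I then choose $\delta = \delta' = \eps Z/(N+\eps Z)$, for which $N\delta/Z + \eps\delta' = \eps$ exactly cancels the $\eps\Delta$ moved from the left-hand side, and a short calculation shows the constant equals $\eps^{-1}(\eps + N/Z)^2$, which is (iii). This value of $\delta=\delta'$ is the Lagrange-multiplier optimum for minimizing $N/(Z\delta) + \eps/\delta'$ subject to the constraint $N\delta/Z + \eps\delta' = \eps$. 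For (iv), the positivity $U_P - K_P \geq 0$ already noted in the text gives $H_P \geq -\Delta - V$; rewriting (i) as $-\Delta - V \geq (1-\eps)(-\Delta) - \eps^{-1}$ and dividing by $1-\eps > 0$ produces (iv).

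No step presents a genuine obstacle: everything reduces to the hydrogen bound combined with scaling, translation invariance, and one short calculus exercise. The only mildly delicate piece is identifying the sharp constant in (iii), which requires treating the two parameters symmetrically as a constrained minimization.
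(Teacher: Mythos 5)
Your proof is correct and follows essentially the same route as the paper: (i) by dilating the hydrogen bound $-\Delta-V\geq -1$, (ii) by averaging (i) over the nonnegative density $\rho$ with $\int\rho=N$, (iii) by combining (i) and (ii) and optimizing the auxiliary parameter (the paper uses a single $\delta$ with the same choice $\delta(\eps+N/Z)=\eps$, so your two-parameter symmetrization is only cosmetic), and (iv) from $H_P\geq-\Delta-V$ together with (i).
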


\begin{proof}
Statement (i) follows from $-\Delta-V \geq -1$ and from the scaling
properties of $-\Delta$ and $V$ with respect to the unitary transformation
$\ph(\mathbf{x},\sigma) \mapsto \eps^{-3/2}\ph(\mathbf{x}/\eps,\sigma)$. Statement (ii) follows from $U_P=\rho*V$, from (i) and from
$\int\rho(x) dx=N$. 

To prove (iii), fix $\eps > 0$. By (i) and (ii),
$$
 \eps V + \frac{1}{Z} U_P \leq \left( \eps + \frac{N}{Z} \right) \left( -\delta\Delta+\delta^{-1} \right)
$$
for all $\delta >0$. Upon subtracting $\eps V$ from both sides and
making the choice $\delta\left(\eps+\frac{N}{Z}\right) = \eps$ for
$\delta$, the desired estimate follows. Inequality (iv) follows from $-\Delta \leq H_P +V$ and from (i).
\end{proof}

\begin{proposition}\label{eigenvalues}
Given $N\in\N,\ Z > 0$ and $P \in S_{N,q}$, let $E_n^\infty$ and $E_n^Z$ denote the $n$-th eigenvalue, counting multiplicities, of the
Schrödinger operator $-\Delta-V$ and the Hartree-Fock operator
$H_P$, respectively. Then:
\begin{itemize}
\item[(a)] For all $n\in\N$, $E_n^\infty \leq E_n^Z$ and
$$
E_n^Z \leq E_n^\infty + 2\frac{N}{Z} + 2\frac{N}{Z}\sqrt{E_n^{\infty}+1}.
$$
\item[(b)] If $P$ minimizes the Hartree-Fock functional, then $E_N^Z$ obeys the
  following estimate: 
$$
E_N^Z \leq E_N^\infty + 2\frac{N-1}{Z}+2\frac{N-1}{Z}\sqrt{E_N^\infty+1}.
$$
\end{itemize}
\end{proposition}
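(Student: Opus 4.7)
The lower bound $E_n^\infty \leq E_n^Z$ is immediate: since $U_P - K_P \geq 0$ (as noted in the text), one has $H_P \geq -\Delta - V$ in the quadratic form sense, and both operators share the essential spectrum $[0,\infty)$, so the min--max principle yields $E_n^\infty \leq E_n^Z$ for every $n$.

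For the upper bound in (a), the plan is to use min--max with the trial subspace $W_n = \spa(\chi_1,\ldots,\chi_n)$, where $\chi_k$ are the eigenfunctions of $-\Delta - V$ associated to $E_1^\infty\leq\cdots\leq E_n^\infty$. For unit $\ph \in W_n$, I drop the nonnegative term $-K_P/Z$ and apply Lemma~\ref{form-estimates}(iii) to get
\begin{equation*}
    \langle \ph, H_P \ph\rangle \leq (1+\eps)\langle \ph,(-\Delta - V)\ph\rangle + \eps^{-1}(\eps+N/Z)^2.
\end{equation*}
Since $\langle \ph,(-\Delta-V)\ph\rangle \leq E_n^\infty \leq 0$ and $1+\eps>0$, the right-hand side is bounded by $E_n^\infty + \eps(E_n^\infty+1) + 2N/Z + N^2/(Z^2\eps)$. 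Writing $a := \sqrt{E_n^\infty+1} \geq 0$, the choice $\eps = N/(Za)$ (and $\eps\to\infty$ when $a=0$) reduces the $\eps$-dependent terms to $2Na/Z$, giving the claim after passing to the min--max.

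Part (b) exploits self-interaction cancellation together with the minimality of $P$. Let $\ph_N$ be the eigenfunction with eigenvalue $E_N^Z$ and set $Q := P - |\ph_N\rangle\langle \ph_N|$, a rank-$(N-1)$ projection. The first step is a ``Koopmans-type'' identity: for any projection $Q'\in S_{N-1,q}$ and any unit $\xi\perp \ran Q'$, direct expansion of $\EE^{HF}$ using $\rho = \rho_{Q'} + |\xi|^2$ and $Q'(x,y) + \xi(x)\overline{\xi(y)}$ yields
\begin{equation*}
    \EE^{HF}(Q' + |\xi\rangle\langle\xi|) - \EE^{HF}(Q') = \langle \xi, H_{Q'} \xi\rangle.
\end{equation*}
Applied with $Q'=Q$ and $\xi=\ph_N$, the self-interaction cancellation $(U_P - K_P)\ph_N = (U_Q - K_Q)\ph_N$ together with $H_P\ph_N = E_N^Z\ph_N$ gives $\EE^{HF}(P)-\EE^{HF}(Q) = E_N^Z$. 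For any other unit $\xi\perp\ran Q$, minimality of $P$ yields $\EE^{HF}(P)\leq \EE^{HF}(Q+|\xi\rangle\langle\xi|)$ and hence $E_N^Z \leq \langle \xi, H_Q \xi\rangle$.

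The last step converts this into an eigenvalue bound. By the max--min formula $\lambda_N(H_Q) = \max_{\dim W = N-1}\inf_{\xi\in W^\perp,\|\xi\|=1}\langle\xi, H_Q\xi\rangle$ applied with the specific choice $W = \ran Q$, we get $E_N^Z \leq \lambda_N(H_Q)$. Part (a) applied to $H_Q$ (for which $\tr Q = N-1$ plays the role of $N$) then supplies the bound $\lambda_N(H_Q) \leq E_N^\infty + \frac{2(N-1)}{Z}(1+\sqrt{E_N^\infty+1})$, which is exactly the claim. The main obstacle I anticipate is setting up part (b) correctly: the naive operator comparison $H_Q\leq H_P$ points the wrong way for an upper bound on $E_N^Z$, and the essential move is to convert this into a \emph{useful} comparison via the Koopmans identity and the minimality of $P$.
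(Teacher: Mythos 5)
Your proposal is correct and follows essentially the same route as the paper: part (a) via the operator inequalities $-\Delta-V\leq H_P\leq(1+\eps)(-\Delta-V)+\eps^{-1}(\eps+N/Z)^2$, min--max, and optimization in $\eps$; part (b) via the Koopmans-type decomposition $\EE^{HF}(P)=\EE^{HF}(Q)+\sprod{\ph_N}{H_Q\ph_N}$ with the $(N-1)$-particle Fock operator (the paper's $H_{P,N-1}$), minimality of $P$, min--max, and part (a) applied with $N-1$ in place of $N$. No gaps worth noting.
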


\begin{proof}
From $U_P-K_P \geq 0$, $K_P \geq 0$ and Lemma~\ref{form-estimates}~(iii) we see that
$$
-\Delta-V \leq H_P \leq -\Delta-V  + \frac{1}{Z}U_P 
\leq (1+\eps)\left(-\Delta-V\right)+\frac{1}{\eps}\left(\eps +\frac{N}{Z}\right)^2
$$
for all $\eps >0$. By the min-max principle, this implies that
$$
E_n^\infty \leq E_n^Z \leq \left( 1 + \eps \right)E_n^\infty + \frac{1}{\eps} \left( \eps+\frac{N}{Z}\right)^2.
$$
Optimizing with respect to $\eps$ yields the desired estimates of part (a).

To prove (b) let $\ph_1,\ldots,\ph_N$ be an orthonormal basis of $\ran P$ with $H_P\ph_k=E_k^Z\ph_k$. 
Then the Hartree-Fock functional can be decomposed as
\begin{eqnarray*}
\EE^{HF}(\ph_1,\dots,\ph_N) & = & \EE^{HF}_{N-1}(\ph_1,\dots,\ph_{N-1}) + \sprod{\ph_N}{H_{P,N-1}\ph_N} \\
& = & \EE^{HF}_{N-1} (\ph_1,\dots,\ph_{N-1}) + E_N^Z,
\end{eqnarray*}
where $\EE^{HF}_{N-1}$ and $H_{P,N-1}$ denote the Hartree-Fock
functional and Fock operator belonging to the $(N-1)$-particle integral
kernel $\tau_{N-1}(x,y) = \sum_{k=1}^{N-1}
\ph_k(x)\overline{\ph_k(y)}$. Since $\psi \mapsto \EE^{HF}(\ph_1, \dots, \ph_{N-1}, \psi)$ is minimized by $\ph_N$,
$$
E_N^Z = \inf_{\begin{array}{l} \scriptstyle \psi \in D(H_{P,N-1}); \ \norm{\scriptstyle\psi}=1 \\ \scriptstyle \psi \in \spa\{\ph_1,\dots,\ph_{N-1}\}^\perp \end{array}} \sprod{\psi}{H_{P,N-1}\psi}.
$$
Thus, using the min-max principle again, $E_N^Z$ is bounded from above
by the $N$-th eigenvalue of $H_{P,N-1}$. Part (a) applied to $H_{P,N-1}$ completes the proof of (b).
\end{proof}


\section{Solving the Hartree-Fock Equations}

This section contains the proofs of all the theorems given in the
introduction. For restricted Hartree-Fock theory see the next
section. Our main tool is the following abstract result comparing the
spectral projections $\chi_\Omega(A)$ and $\chi_\Omega(B)$ of two
self-adjoint operators $A$ and $B$.

\begin{proposition}\label{lm-abstract}
Let $A,B:D\subset\HH \to \HH$ be self-adjoint operators and let
$\Omega\subset\R$ be a bounded Borel set for which the spectra of $A$ and $B$
satisfy the gap conditions
\begin{equation}\label{gap-assumption}
\begin{split}
\dist(\sigma(A)\cap\Omega, \sigma(B)\setminus\Omega) &\geq \delta,\\
\dist(\sigma(B)\cap\Omega, \sigma(A)\setminus\Omega) &\geq \delta,
\end{split}
\end{equation}
for some $\delta>0$. Suppose $A$ and $B$ have only point spectrum in $\Omega$. Then 
$$
\norm{\chi_\Omega(A)-\chi_\Omega(B)}_2 \leq \delta^{-1} \left( \norm{(A-B)\chi_\Omega(A)}_2^2 + \norm{(A-B)\chi_\Omega(B)}_2^2 \right)^{1/2}.
$$
\end{proposition}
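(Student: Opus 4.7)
The plan is to exploit the algebraic identity
$$
P_A-P_B \;=\; P_A(1-P_B)-(1-P_A)P_B,
$$
where $P_A:=\chi_\Omega(A)$, $P_B:=\chi_\Omega(B)$. The two summands are Hilbert--Schmidt orthogonal, because
$$
\langle P_A(1-P_B),(1-P_A)P_B\rangle_{HS} \;=\; \tr\!\bigl(P_B(1-P_A)P_A(1-P_B)\bigr)=0.
$$
Thus by the Pythagorean theorem in $\TT_2(\HH)$,
$$
\|P_A-P_B\|_2^2 \;=\; \|P_A(1-P_B)\|_2^2+\|(1-P_A)P_B\|_2^2,
$$
so it suffices to bound the two off-diagonal blocks separately by $\delta^{-1}\|(A-B)P_A\|_2$ and $\delta^{-1}\|(A-B)P_B\|_2$.

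The key observation is the intertwining identity obtained from $[B,P_B]=0$: for any eigenvector $e$ of $A$ with $Ae=\lambda e$,
$$
(1-P_B)(A-B)e \;=\; \lambda(1-P_B)e - (1-P_B)Be \;=\; (\lambda-B)(1-P_B)e.
$$
Since $A$ has pure point spectrum on $\Omega$, I can pick an orthonormal basis $\{e_i\}$ of $\ran P_A$ with $Ae_i=\lambda_i e_i$, $\lambda_i\in\Omega$. The restriction $(\lambda_i-B)\rst_{\ran(1-P_B)}$ has spectrum contained in $\lambda_i-(\sigma(B)\setminus\Omega)$, which by the first gap assumption lies at distance at least $\delta$ from $0$. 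Hence this restriction is invertible with inverse of norm at most $\delta^{-1}$, and applying its inverse to $(1-P_B)e_i\in\ran(1-P_B)$ yields
$$
\|(1-P_B)e_i\| \;\leq\; \delta^{-1}\|(1-P_B)(A-B)e_i\| \;\leq\; \delta^{-1}\|(A-B)e_i\|.
$$

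Squaring and summing over $i$ gives
$$
\|P_A(1-P_B)\|_2^2 \;=\; \sum_i\|(1-P_B)e_i\|^2 \;\leq\; \delta^{-2}\sum_i\|(A-B)e_i\|^2 \;=\; \delta^{-2}\|(A-B)P_A\|_2^2,
$$
and the symmetric argument, using the second gap condition and an eigenbasis of $\ran P_B$ (available because $B$ also has pure point spectrum on $\Omega$), provides the analogous bound $\|(1-P_A)P_B\|_2^2\le \delta^{-2}\|(A-B)P_B\|_2^2$. Summing and taking the square root delivers the claim. The only technical subtlety is that outside $\Omega$ the spectrum of $B$ (respectively $A$) need not be discrete, so one cannot expand $(1-P_B)e_i$ in an eigenbasis of $B$; but this is not actually needed, since the bound $\|(\lambda_i-B)^{-1}\rst_{\ran(1-P_B)}\|\le\delta^{-1}$ is a direct consequence of the functional calculus for the self-adjoint operator $B$ and the gap hypothesis, with no eigenbasis required on the complement.
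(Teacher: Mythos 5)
Your proof is correct and follows essentially the same route as the paper: your Pythagorean decomposition $\|P_A-P_B\|_2^2=\|P_A(1-P_B)\|_2^2+\|(1-P_A)P_B\|_2^2$ is exactly the paper's trace identity, and your bound $\|(\lambda_i-B)^{-1}\rst\ran(1-P_B)\|\le\delta^{-1}$ is the same spectral-theorem estimate the paper phrases as $\delta^{-1}|B-\eps_k|\geq 1-\chi_\Omega(B)$, applied in an eigenbasis of $A$ (resp.\ $B$) on $\Omega$. A minor bonus: your formulation handles the case of infinitely many eigenvalues in $\Omega$ directly, which the paper leaves as an exercise.
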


\begin{proof}
We only prove the proposition in the case where $\chi_\Omega(A)$ and $\chi_\Omega(B)$ are finite rank projections. 
The more general case of infinite point spectrum in $\Omega$ is left as an exercise for the reader.
Let $\chi(A) := \chi_\Omega(A)$ and $\chi(B) := \chi_\Omega(B)$ for short. Then
$\chi(A)^2=\chi(A)=\chi(A)^*$ and similarly for $\chi(B)$. It follows that
\begin{align}
\norm{\chi(A)-\chi(B)}^2_2 =&\tr(\chi(A)(1-\chi(B))\chi(A))\nonumber\\
&+ \tr(\chi(B)(1-\chi(A))\chi(B))\label{cyclic-trace-eqn}
\end{align}
where we also used the cyclicity of the trace. To estimate the first
term on the right hand side of \eqref{cyclic-trace-eqn} we choose an
orthonormal basis $(\ph_k)_{k=1}^n$ of $\chi(A)$ consisting of eigenfunctions of $A$:
$$
A\ph_k = \eps_k \ph_k, \quad \eps_k \in \sigma(A)\cap\Omega, \ k=1,\dots,n.
$$
By the gap assumption \eqref{gap-assumption}, $|\lambda-\eps_k| \geq \delta$ for $\lambda \in \sigma(B) \setminus \Omega$ and 
hence, by the spectral theorem,
$$
\frac{1}{\delta} | B - \eps_k | \geq 1 - \chi_\Omega(B), \quad k=1,\dots,n.
$$
We conclude that
\begin{eqnarray}
\tr(\chi(A)(1-\chi(B))\chi(A)) & = & \sum_{k=1}^n \sprod{\ph_k}{(1-\chi(B))\ph_k} \nonumber \\
& \leq & \delta^{-2} \sum_{k=1}^n \sprod{\ph_k}{(B-\eps_k)^2\ph_k} \nonumber \\
& = & \delta^{-2} \sum_{k=1}^n \sprod{\ph_k}{(B-A)^2\ph_k} \nonumber \\
& = & \delta^{-2} \norm{(A-B)\chi(A)}^2_2. \label{trace estimate}
\end{eqnarray}
The proposition follows from \eqref{cyclic-trace-eqn}, \eqref{trace
  estimate} and from an estimate similar to \eqref{trace estimate} with $A$ and $B$ interchanged.
\end{proof}

\begin{lemma}
\label{H1-estimates}
For all $\ph\in H^1(\R^3\times\{1,\dots,q\})$ and $P,Q \in S_{N,q}$
\begin{itemize}
\item[(i)] $\norm{(U_P-U_Q)\ph} \leq 4 \norm{P-Q}_1 \norm{\nabla \ph}$
\item[(ii)] $\norm{(K_P-K_Q)\ph} \leq 4 \norm{P-Q}_2 \norm{\nabla \ph}.$
\end{itemize}
\end{lemma}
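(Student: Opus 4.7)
The plan is to derive both estimates from Cauchy--Schwarz combined with Hardy's inequality in the form
\begin{equation*}
    \int V(x-y)^2 |\ph(y)|^2\,dy \leq 16 \norm{\nabla\ph}^2,\qquad \mathbf{x}\in\R^3,
\end{equation*}
which follows by applying the usual Hardy inequality $\int |f(\mathbf{y})|^2/|\mathbf{x}-\mathbf{y}|^2\,d\mathbf{y} \leq 4\norm{\nabla f}^2$ to each spin component of $\ph$ and summing.

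Part (ii) is the quicker of the two. Writing $\tau := \tau_P-\tau_Q$ for the kernel of $P-Q$ and grouping the integrand of $(K_P-K_Q)\ph(x) = \int V(x-y)\tau(x,y)\ph(y)\,dy$ as $\tau(x,y)\cdot V(x-y)\ph(y)$, Cauchy--Schwarz in the variable $y$ yields
\begin{equation*}
    |(K_P-K_Q)\ph(x)|^2 \leq \Big(\int |\tau(x,y)|^2\,dy\Big)\Big(\int V(x-y)^2|\ph(y)|^2\,dy\Big)
\end{equation*}
for a.e.\ $x$. The second factor is bounded by $16\norm{\nabla\ph}^2$ by the Hardy inequality stated above. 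Integrating in $x$ and using $\iint |\tau(x,y)|^2\,dx\,dy = \norm{P-Q}_2^2$ then gives (ii).

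For (i), recall that $U_P-U_Q$ is multiplication by the function $\mathbf{x}\mapsto V*(\rho_P-\rho_Q)(\mathbf{x})$. I would first bound this modulus pointwise by $V*|\rho_P-\rho_Q|$ and then apply Cauchy--Schwarz in $\mathbf{y}$ with respect to the positive measure $|\rho_P-\rho_Q|(\mathbf{y})\,d\mathbf{y}$, which produces
\begin{equation*}
    \big(V*|\rho_P-\rho_Q|(\mathbf{x})\big)^2 \leq \norm{\rho_P-\rho_Q}_{L^1}\cdot\big(V^2*|\rho_P-\rho_Q|\big)(\mathbf{x}).
\end{equation*}
Multiplying by $|\ph(x)|^2$, integrating over $x$, swapping the order of integration via Fubini, and applying Hardy's inequality in the variable $\mathbf{x}$ gives $\norm{(U_P-U_Q)\ph}^2 \leq 16\,\norm{\rho_P-\rho_Q}_{L^1}^2\,\norm{\nabla\ph}^2$.

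The one auxiliary fact that deserves separate verification, and which I regard as the main bookkeeping step for part (i), is $\norm{\rho_P-\rho_Q}_{L^1}\leq \norm{P-Q}_1$. This follows from the spectral decomposition $P-Q = \sum_j \mu_j |\eta_j\rangle\langle\eta_j|$ with $\mu_j\in\R$: then $|\rho_{P-Q}(\mathbf{y})|\leq\sum_j|\mu_j||\eta_j(\mathbf{y})|^2 = \rho_{|P-Q|}(\mathbf{y})$, and integration yields $\sum_j|\mu_j|=\norm{P-Q}_1$. Beyond this I do not anticipate any real obstacle; the only genuine choice in either part is the Cauchy--Schwarz split, which is dictated by the need to pull an $\norm{\nabla\ph}$ out via Hardy on the one hand and to recover $\norm{P-Q}_1$ or $\norm{P-Q}_2$ on the other.
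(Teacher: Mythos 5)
Your proof is correct and in substance coincides with the paper's: both arguments rest on Cauchy--Schwarz together with Hardy's inequality (the ``uncertainty principle lemma'') to extract $\norm{\nabla\ph}$, and on the bound $\norm{\rho_P-\rho_Q}_{L^1}\leq\norm{P-Q}_1$ (Lemma~\ref{trace-formula}(ii), which you re-derive via the spectral decomposition of $P-Q$). The only cosmetic difference is that you estimate $\norm{(U_P-U_Q)\ph}$ and $\norm{(K_P-K_Q)\ph}$ directly, whereas the paper tests against an arbitrary $\psi\in L^2$ and argues by duality; the constants come out identically.
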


\begin{proof}
For all $\psi \in L^2(\R^3 \times \{1,\dots,q\} )$
\begin{eqnarray*}
|\sprod{\psi}{(U_P-U_Q)\ph}| & \leq & 2 \int \,dy |\rho_P(y)-\rho_Q(y)| \int |\psi(x)| \frac{|\ph(x)|}{|x-y|} \,dx \\
& \leq & 4 \int \,dy |\rho_P(y)-\rho_Q(y)| \norm{\psi} \norm{\nabla \ph} \\
& \leq & 4 \norm{P-Q}_1 \norm{\psi} \norm{\nabla \ph}
\end{eqnarray*}
by Cauchy-Schwarz, the uncertainty principle lemma \cite[section
X.2]{ReedSimon2} and by Lemma~\ref{trace-formula}. This proves (i). The proof of (ii) is similar.
\end{proof}

\begin{proposition}\label{contraction}
Let $P,Q \in S_{N,q}\cap\{P^2=P\}$ and let $\Omega \subset (-\infty,0)$ be a bounded Borel set such that
$\dist (\Omega,\sigma(H_P)\setminus \Omega) \geq \delta$ and $\dist (\Omega,\sigma(H_Q)\setminus \Omega) \geq \delta$
for some $\delta >0$. Suppose moreover that
$\tr(\chi_{\Omega}(H_P))=N=\tr(\chi_{\Omega}(H_Q))$. Then
\begin{eqnarray*}
\lefteqn{\norm{\chi_\Omega(H_P)-\chi_\Omega(H_Q)}_2} \\
&& \leq\frac{4}{\delta Z} \left( 1 + \sqrt{2N} \right)
\left(\norm{\sqrt{-\Delta}\chi_\Omega(H_P)}^2_2 +
  \norm{\sqrt{-\Delta}\chi_\Omega(H_Q)}^2_2 \right)^{1/2}
\norm{P-Q}_2\\
&& \leq  \frac{8}{\delta Z}\left(1+\sqrt{2N}\right)\sqrt{2N}\norm{P-Q}_2.
\end{eqnarray*}
The factor 8 in the last line may be replaced by 4 if both $P$ and $Q$
satisfy the Hartree-Fock equation \eqref{eq:fixpoint}. It may be replaced by $4\sqrt{2.5}$
if $P$ or $Q$ satisfies \eqref{eq:fixpoint}. 
\end{proposition}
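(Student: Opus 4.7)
The plan is to apply Proposition~\ref{lm-abstract} with $A=H_P$, $B=H_Q$. Both Fock operators differ from $-\Delta-V$ by a nonnegative bounded perturbation, so their spectra in $\Omega\subset(-\infty,0)$ are purely discrete, and the two gap hypotheses of Proposition~\ref{lm-abstract} are precisely what is assumed here. This yields
\begin{equation*}
\norm{\chi_\Omega(H_P)-\chi_\Omega(H_Q)}_2 \;\leq\; \delta^{-1}\bigl(\norm{(H_P-H_Q)\chi_\Omega(H_P)}_2^2+\norm{(H_P-H_Q)\chi_\Omega(H_Q)}_2^2\bigr)^{1/2}.
\end{equation*}

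The heart of the proof is to estimate each Hilbert-Schmidt norm. Writing $H_P-H_Q=Z^{-1}((U_P-U_Q)-(K_P-K_Q))$ and expanding against an orthonormal basis $\{\ph_k\}_{k=1}^N$ of $\ran\chi_\Omega(H_P)$ of eigenfunctions of $H_P$, Lemma~\ref{H1-estimates}(i)--(ii) yield
\begin{equation*}
\norm{(U_P-U_Q)\chi_\Omega(H_P)}_2\leq 4\norm{P-Q}_1\norm{\sqrt{-\Delta}\chi_\Omega(H_P)}_2,\qquad \norm{(K_P-K_Q)\chi_\Omega(H_P)}_2\leq 4\norm{P-Q}_2\norm{\sqrt{-\Delta}\chi_\Omega(H_P)}_2.
\end{equation*}
Since $P-Q$ has rank at most $2N$, Cauchy-Schwarz on its non-zero eigenvalues gives $\norm{P-Q}_1\leq\sqrt{2N}\,\norm{P-Q}_2$, and the triangle inequality combines these into
\begin{equation*}
\norm{(H_P-H_Q)\chi_\Omega(H_P)}_2 \leq \frac{4}{Z}(1+\sqrt{2N})\,\norm{P-Q}_2\,\norm{\sqrt{-\Delta}\chi_\Omega(H_P)}_2,
\end{equation*}
and the analogous bound with $P,Q$ swapped. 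Substituting into the abstract inequality delivers the first displayed estimate.

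For the second inequality I bound $\norm{\sqrt{-\Delta}\chi_\Omega(H_P)}_2^2\leq 4N$ using Lemma~\ref{form-estimates}(iv) at $\eps=1/2$, which reads $-\Delta\leq 2H_P+4$. Tracing against the rank-$N$ projection $\chi_\Omega(H_P)$ and using that each of its $N$ eigenvalues of $H_P$ lies in $\Omega\subset(-\infty,0)$ and is therefore nonpositive gives $\norm{\sqrt{-\Delta}\chi_\Omega(H_P)}_2^2\leq 2\tr(\chi_\Omega(H_P)H_P)+4N\leq 4N$. Applying the same bound for $Q$ and taking a square root produces the factor $\sqrt{4N+4N}=2\sqrt{2N}$, and hence the overall constant $8$.

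The refined constants $4$ and $4\sqrt{2.5}=\sqrt{40}$ require the sharper estimate $\norm{\sqrt{-\Delta}\chi_\Omega(H_P)}_2^2\leq N$ whenever $P=\chi_\Omega(H_P)$, and this I expect to be the main obstacle: Lemma~\ref{form-estimates}(iv) alone is off by a factor of four. The fixed-point condition must be exploited more directly through the identity $\norm{\nabla\ph_k}^2=\eps_k+\sprod{\ph_k}{V\ph_k}-\frac{1}{Z}\sprod{\ph_k}{(U_P-K_P)\ph_k}$ together with some control of $\tr(PV)$ that reflects the hydrogenic structure of the orbitals: for the unperturbed operator $-\Delta-V$ the virial theorem forces a filled shell at principal quantum number $n$ to carry total kinetic energy $qn^2\cdot n^{-2}=q$, so $s$ filled shells contribute $qs\leq q\sum n_k^2=N$. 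Once the sharpened bound is in hand, the constants $4$ and $4\sqrt{2.5}$ follow from $\sqrt{N+N}=\sqrt{2N}$ and $\sqrt{N+4N}=\sqrt{2.5}\sqrt{2N}$, respectively.
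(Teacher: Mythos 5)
Your treatment of the two displayed inequalities coincides with the paper's proof: the same application of Proposition~\ref{lm-abstract} to $H_P$ and $H_Q$, the same splitting $H_P-H_Q=Z^{-1}(U_P-U_Q)-Z^{-1}(K_P-K_Q)$ estimated via Lemma~\ref{H1-estimates} together with $\norm{P-Q}_1\leq\sqrt{2N}\,\norm{P-Q}_2$, and the same use of Lemma~\ref{form-estimates}~(iv) with $\eps=1/2$ to obtain $\norm{\sqrt{-\Delta}\chi_\Omega(H_P)}_2^2\leq 4N$. (A minor remark: you justify discreteness of the spectrum in $\Omega$ by calling $Z^{-1}(U_P-K_P)$ a bounded perturbation, which it need not be; but the hypothesis $\tr\chi_\Omega(H_P)=N<\infty$ already forces finite point spectrum in $\Omega$, so nothing is lost there.)

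For the last assertion, however --- the improved factors $4$ and $4\sqrt{2.5}$ --- your proposal has a genuine gap, which you acknowledge: you do not establish the sharpened bound $\norm{\sqrt{-\Delta}\chi_\Omega(H_P)}_2^2\leq N$ when $P=\chi_\Omega(H_P)$, and the route you sketch is not the right mechanism. The virial theorem for the hydrogenic operator $-\Delta-V$ applied shell by shell says nothing about the orbitals of $H_P$, which are not hydrogen orbitals, and the identity you write for $\norm{\nabla\ph_k}^2$ still contains $\sprod{\ph_k}{V\ph_k}$ and $Z^{-1}\sprod{\ph_k}{(U_P-K_P)\ph_k}$, for which you offer no control. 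The paper's argument exploits the interacting problem directly: if $P$ solves \eqref{eq:fixpoint}, then $P$ is a solution of the Hartree-Fock equations, hence a critical point of the (rescaled) Hartree-Fock functional $\EE^{HF}(P)$ of Section~2, and the virial theorem for such critical points gives $\tr(-\Delta P)=-\EE^{HF}(P)=|\EE^{HF}(P)|$; since the electron-electron term in $\EE^{HF}(P)$ is nonnegative and $-\Delta-V\geq -1$ with $\tr P=N$, one has $\EE^{HF}(P)\geq\tr\big[(-\Delta-V)P\big]\geq -N$, whence $\norm{\sqrt{-\Delta}\chi_\Omega(H_P)}_2^2=|\EE^{HF}(P)|\leq N$. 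With this bound in hand, the constants $4$ (both $P$ and $Q$ solve \eqref{eq:fixpoint}, via $\sqrt{N+N}$) and $4\sqrt{2.5}$ (only one solves, via $\sqrt{N+4N}$) follow exactly as you indicate.
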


\emph{Remark.}
The set $S_{N,q}\cap\{P^2=P\}$ is a closed subset of the
Hilbert-Schmidt operators on $L^2(\R^3\times\{1,\ldots,q\})$ and hence
it is complete with respect to the metric $d(P,\tilde P) = \|P-\tilde
P\|_2$. 

\begin{proof}
Applying Proposition~\ref{lm-abstract} to $H_P$ and $H_Q$ we see that we
need to estimate $\norm{(H_P-H_Q)\chi_{\Omega}(H_P)}_2$ and the same
expression with $P$ and $Q$ interchanged. By definition of $H_P$ and
$H_Q$, 
$$
   H_P-H_Q = Z^{-1}(U_P-U_Q) - Z^{-1}(K_P-K_Q)
$$and by Lemma~\ref{H1-estimates},
\begin{align*}
    \|(U_P-U_Q)\chi_\Omega(H_P)\|_2 &\leq 4\|P-Q\|_1 \norm{\sqrt{-\Delta}\chi_\Omega(H_P)}_2\\
     \|(K_P-K_Q)\chi_\Omega(H_P)\|_2 &\leq 4\|P-Q\|_2 \norm{\sqrt{-\Delta}\chi_\Omega(H_P)}_2.
\end{align*}
The first inequality of the theorem now follows from $\norm{P-Q}_1 \leq \sqrt{2N}
\norm{P-Q}_2$. Here we use that $P$ and $Q$ have rank $N$. From Lemma~\ref{form-estimates} (iv) with $\eps=1/2$ we see that 
$-\Delta\leq 2H_P+4$. Therefore
\begin{align*}
  \norm{\sqrt{-\Delta}\chi_\Omega(H_P)}_2^2 
&= \tr \big(\chi_\Omega(H_P)(-\Delta)\chi_\Omega(H_P)\big)\\
&\leq 4\tr \chi_\Omega(H_P) =4N. 
\end{align*}
If $P$ solves the Hartree-Fock equation then $
\norm{\sqrt{-\Delta}\chi_\Omega(H_P)}_2^2=|\EE^{HF}(P)|\leq N$ by the virial
theorem and because $-\Delta-V\geq -1$.
\end{proof}

For each $R\in\text{SO}(3)$ we define a unitary operator $U(R)$ in $L^2(\R^3)$ by
\begin{equation}\label{rotation}
[U(R)\psi](x):=\psi(R^{-1}x).
\end{equation}

The following theorem describes the content of Theorem~\ref{main1} in terms of rank $N$ projections.

\begin{theorem}\label{thm:uni-pro}
Under the assumptions of Theorem~\ref{main1} the Hartree-Fock equation \eqref{eq:fixpoint} 
has a unique solution $P\in S_{N,q}$ with the property that $Z^2\sigma(H_P \restricted P\HH)$ is given by \eqref{ev-bounds}. 
This $P$ is of the form $P=P' \otimes 1$ with respect to $L^2(\R^3) \otimes \C^q$ where $P'\in S_{N/q,1}$ 
and moreover $P'=U(R)P'U(R)^*$ for all $R\in\text{SO}(3)$.
\end{theorem}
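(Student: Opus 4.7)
My plan is to recast \eqref{eq:fixpoint} as a fixed-point problem and apply Banach's contraction principle on the complete metric space $\mathcal M := S_{N,q}\cap\{P^2=P\}$ (completeness in Hilbert--Schmidt norm is the remark following Proposition~\ref{contraction}), with the concrete spectral window
\[
    \Omega\ :=\ \bigcup_{k=1}^{s}\bigl[-n_k^{-2},\,-n_k^{-2}+d\bigr],\qquad d\ :=\ 4N/Z,
\]
and with $F(P):=\chi_\Omega(H_P)$. Under the hypothesis $Z>\Delta_s^{-1}(20N+8\sqrt{2N})$ the width satisfies $d<\Delta_s$, so the intervals defining $\Omega$ are pairwise disjoint and separated from the remaining negative eigenvalues of $-\Delta-V$.

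The first step is to check that $F$ maps $\mathcal M$ into itself with a uniform spectral gap. Since $\sigma(-\Delta-V)$ consists of the eigenvalues $-n^{-2}$ with multiplicity $qn^2$, and since Proposition~\ref{eigenvalues}(a) gives $E_n^\infty\le E_n^Z\le E_n^\infty+4N/Z$ for every eigenvalue of $H_P$, the choice $d=4N/Z$ guarantees that the $q\sum_k n_k^2=N$ perturbed eigenvalues near $-n_k^{-2}$ ($k=1,\dots,s$) land inside $\Omega$, every other eigenvalue of $H_P$ stays outside, and
\[
    \delta\ :=\ \dist\bigl(\Omega,\,\sigma(H_P)\setminus\Omega\bigr)\ \ge\ \Delta_s-d\ =\ \Delta_s-4N/Z\ >\ 0.
\]
Hence $F(P)$ is a rank-$N$ orthogonal projection whose range is spanned by eigenvectors of $H_P$ with eigenvalues in $\Omega\subset\cup_k[-n_k^{-2},-(n_k+1)^{-2})$, matching condition \eqref{ev-bounds} after rescaling by $Z^2$.

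Applying Proposition~\ref{contraction} with this $\delta$ now yields
\[
    \|F(P)-F(Q)\|_2\ \le\ \frac{8(1+\sqrt{2N})\sqrt{2N}}{\delta Z}\,\|P-Q\|_2\ =\ \frac{16N+8\sqrt{2N}}{\Delta_s Z-4N}\,\|P-Q\|_2,
\]
and the hypothesis $\Delta_s Z>20N+8\sqrt{2N}$ is exactly the statement that this coefficient is strictly less than $1$. Banach's theorem thus produces a unique fixed point $P\in\mathcal M$. For the symmetry claims, the definitions in Section~2 give $U_{WPW^*}=WU_PW^*$ and $K_{WPW^*}=WK_PW^*$, hence $H_{WPW^*}=WH_PW^*$, whenever $W$ is a unitary on $L^2(\R^3\times\{1,\dots,q\})$ commuting with both $-\Delta$ and multiplication by $V$; in particular this holds for $W=U(R)\otimes 1$ ($R\in\text{SO}(3)$) and for $W=1\otimes U$ ($U\in\text{U}(q)$). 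Therefore $WPW^*$ is a further fixed point of $F$, and the uniqueness just proved forces $WPW^*=P$. Invariance under all of $1\otimes\text{U}(q)$ combined with Schur's lemma (the defining representation on $\C^q$ is irreducible) forces $P=P'\otimes I_q$, while $P^2=P$ and $\tr P=N$ give $P'\in S_{N/q,1}$; the residual spatial invariance then reduces to $U(R)P'U(R)^*=P'$.

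The main obstacle is the tight arithmetic underlying the bound on $Z$: the width $d$ must be simultaneously (a) large enough to contain all perturbed filled-shell eigenvalues (forcing $d\ge 4N/Z$), (b) small enough to leave a positive residual gap $\delta=\Delta_s-d$ separating $\Omega$ from the other eigenvalues of $H_P$, and (c) such that the contraction constant $8(1+\sqrt{2N})\sqrt{2N}/(\delta Z)$ from Proposition~\ref{contraction} is strictly below $1$. The choice $d=4N/Z$ collapses (a)--(c) into the single inequality $Z>\Delta_s^{-1}(20N+8\sqrt{2N})$, with essentially no slack.
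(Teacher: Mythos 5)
Your proposal is correct, and the two halves fare differently when compared with the paper. The existence--uniqueness half is essentially the paper's own argument: the same window $\Omega$ of width $4N/Z$, the same gap $\delta=\Delta_s-4N/Z$ extracted from Proposition~\ref{eigenvalues}, the same Lipschitz constant from Proposition~\ref{contraction}, and the same arithmetic collapsing everything into \eqref{Z-bound}. Where you genuinely diverge is in obtaining $P=P'\otimes 1$ and the rotational invariance of $P'$: the paper introduces the reduced operator $H^{(q)}_{P'}=-\Delta-V+\frac{q}{Z}U_{P'}-\frac{1}{Z}K_{P'}$ on $L^2(\R^3)$, shows that the modified fixed-point equation \eqref{mod-HFE} is itself uniquely solvable by the same contraction argument, identifies $P=P'\otimes 1$ via $H_{P'\otimes 1}=H^{(q)}_{P'}\otimes 1$ and uniqueness of \eqref{eq:fixpoint}, and then gets $U(R)P'U(R)^*=P'$ from uniqueness of the reduced equation. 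You instead act with the symmetry group on the unique fixed point: for $W=1\otimes U$ ($U$ unitary on $\C^q$) and $W=U(R)\otimes 1$ one has $H_{WPW^*}=WH_PW^*$, so $WPW^*$ is again a solution and uniqueness forces $WPW^*=P$; the commutant of $1\otimes M_q(\C)$ (your Schur step) then gives $P=P'\otimes 1_q$ with $P'\in S_{N/q,1}$, and the spatial invariance reduces to $U(R)P'U(R)^*=P'$. Your route is more economical -- it avoids re-verifying the gap and contraction estimates for $H^{(q)}_{P'}$, a step the paper only sketches -- and it makes the group-theoretic origin of the tensor structure transparent; the paper's detour buys, as a byproduct, unique solvability of the spin-free reduced equation, which is the operator actually used in the radial analysis completing Theorem~\ref{main1}. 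One small caution: the covariance $U_{WPW^*}=WU_PW^*$ and $K_{WPW^*}=WK_PW^*$ should be verified for the two specific families of unitaries you use (via the transformation of the kernel $\tau_P$ and of the spin-summed density, using that $V$ is radial and spin-independent), not asserted for arbitrary unitaries commuting with $-\Delta$ and $V$; in that generality the identity is not automatic, though it is immediate in the two cases you need.
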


\begin{proof}
We first prove existence and uniqueness of $P$ using Proposition~\ref{contraction} with
$$
\Omega := \bigcup_{k=1}^{s} \left[ -n_k^{-2}, -n_k^{-2} + \frac{4N}{Z} \right].
$$
By Proposition~\ref{eigenvalues}, for $Z$ large enough and all $P\in S_{N,q}$
$$
\dist(\Omega, \sigma(H_P) \setminus \Omega) \geq n_{s}^{-2} - (n_{s}+1)^{-2} - \frac{4N}{Z} =: \delta,
$$
$\delta>0$, and $\tr \chi_{\Omega}(H_P)=N$. Hence the Proposition~\ref{contraction}, the remark thereafter, and the
contraction principle imply that the equation $P=\chi_\Omega(H_P)$ has a unique solution $P$ provided that
\begin{equation} \label{usc}
\frac{8}{Z} \left( 1+ \sqrt{2N} \right) \sqrt{2N} < \delta.
\end{equation}
This is satisfied for $Z$ obeying \eqref{Z-bound}.

We next show that $P$ is of the form $P'\otimes 1$ in $L^2(\R^3) \otimes \C^q$ where $P' \in S_{N/q,1}$. 
To this end we consider the modified Hartree-Fock equation 
\begin{equation}
\label{mod-HFE}
P' = \chi_\Omega(H_{P'}^{(q)})
\end{equation}
where 
$$
H_{P'}^{(q)} := -\Delta -V + \frac{q}{Z} U_{P'} - \frac{1}{Z} K_{P'}
$$
in $L^2(\R^3)$. The eigenvalues of $H_{P'}^{(q)}$ satisfy the estimate given by Proposition~\ref{eigenvalues} a). 
Therefore, the arguments above show that \eqref{mod-HFE} has a unique
solution $P'$ because \eqref{usc} holds by assumption \eqref{Z-bound}. 
From  $H_{P'\otimes 1}=H_{P'}^{(q)}\otimes 1$ and \eqref{mod-HFE}
it follows that $P' \otimes 1$ solves the Hartree-Fock equation \eqref{eq:fixpoint}. 
Hence, $P=P' \otimes 1$ by the uniqueness of the solution to \eqref{eq:fixpoint}.

Finally we prove that $P'$ commutes with $U(R)$. From the spherical symmetry of $V$ it follows that $U(R) H_{P'}^{(q)} U(R)^* = H^{(q)}_{P'(R)}$ where $P'(R) = U(R) P' U(R)^*$. 
Using \eqref{mod-HFE}, we conclude that
$$
P'(R) = U(R) \chi_\Omega(H^{(q)}_{P'}) U(R)^* = \chi_\Omega\Big(U(R) H_{P'}^{(q)} U(R)^*\Big) = \chi_\Omega\Big(H^{(q)}_{P'(R)}\Big)
$$
which implies $P'(R)=P'$ because the solution to \eqref{mod-HFE} is unique.
\end{proof}

\begin{proof}[\textbf{Proof of Theorem~\ref{main1}}]
Theorem~\ref{thm:uni-pro} tells us that $P=P'\otimes 1$ where $P'$ commutes with all rotations. 
Moreover, $H_P=H_{P'}^{(q)}\otimes 1$,
$P'=\chi_{\Omega}(H_{P'}^{(q)})$ and $H_{P'}^{(q)}$ commutes with all
rotations as well. Let $\eps_k\in \Omega$ be an eigenvalue of $H_{P'}^{(q)}$. 
The eigenspace associated with $\eps_k$ carries a representation of
$SO(3)$ given by \eqref{rotation}. Its irreducible subspaces are spanned by functions of the form 
\begin{equation}\label{ylm}
|\mathbf{x}|^{-1}f(|\mathbf{x}|)Y_{lm}(\mathbf{x}),\qquad f\in L^2(\R_{+}),
\end{equation}
where $Y_{lm}$ denotes a spherical harmonic. 
Now fix $l$ and $m$ and let $\HH_{lm}$ denote the space of all functions of the form \eqref{ylm} with 
arbitrary $f$. This space is reducing for both $-\Delta-V$ and $H_{P'}^{(q)}$. Since the spectrum of $(-\Delta-V)\upharpoonright\HH_{lm}$ in 
$(-\infty,0)$ consists of the simple eigenvalues $-1/n^2$, $n\geq l+1$,
it follows from Proposition~\ref{eigenvalues} and from the assumption
on $Z$ that $H_{P'}^{(q)}\upharpoonright\HH_{lm}$ has exactly one
eigenvalue in each of the intervals
$[-n_{k}^{-2},-(n_k+1)^{-2})$ with $n_k\geq l+1$. This completes the proof of Theorem~\ref{main1}.
\end{proof}

\begin{proof}[\textbf{Proof of Theorem~\ref{main2}}]
For $Z>N-1$ the Hartree-Fock functional is known to have a minimizer
and any minimizer $P$ is the spectral projection onto the spectral
subspace of $H_P$ associated with the lowest $N$ eigenvalues \cite{LS1977}. If
$N=q\sum_{n=1}^{s}n^2$ for some $s$, then, by Proposition~\ref{eigenvalues} the lowest $N$
eigenvalues of $H_P$ belong to $[-1,-(s+1)^{-2})$ provided that
$4(N-1)/Z<s^{-2}-(s+1)^{-2}$. Thus for sufficiently large $Z$ Theorem~\ref{main1} implies uniqueness of
the minimizer of the Hartree-Fock functional as well as the assertions
on the one-particle orbitals. The bound on $Z$ is obtained by
inspection of the proof of Theorem~\ref{main1} keeping in mind that
$P$ is a minimizer which is given. Hence the improved
bounds from Proposition~\ref{eigenvalues} and Proposition~\ref{contraction} are available.
\end{proof}

\begin{proof}[\textbf{Proof of Theorem~\ref{thm:H}}]
Given $\ph_1,\dots,\ph_N \in H^1(\R^3)$ with $\int |\ph_k(x)|^2 \,dx = 1$ let $\Phi_1,\dots,\Phi_N \in H^1(\R^3\times\{1,\ldots,N\})$ be defined by
\begin{equation}
\label{Hartree-eqn1}
\Phi_k(\mathbf{x},s) := \ph_k(\mathbf{x})\delta_{ks}, \quad k=1,\dots,N.
\end{equation}
Then $\Phi_1,\dots,\Phi_N$ are orthonormal in $L^2(\R^3\times\{1,\ldots,N\})$ and
\begin{equation}
\label{Hartree-eqn2}
\EE^{HF}(\Phi_1,\dots,\Phi_N) = \EE^{H} (\ph_1,\dots,\ph_N)
\end{equation}
by the definitions of $\EE^{HF}$ and $\EE^H$.  By \eqref{Hartree-eqn2}, the minimization problem for $\EE^{H}$ 
is equivalent to the minimization problem 
for $\EE^{HF}$ with $q=N$ in the restricted class of orbitals of the form \eqref{Hartree-eqn1}. 
By Theorem~\ref{main2}, the (unrestricted) Hartree-Fock functional for
$q=N$ and $Z$ sufficiently large has a unique minimizer 
$P\in S_{N,N}$, which is of the form $P=P'\otimes 1$ with respect to $L^2(\R^3\times\{1,\ldots,N\})=L^2(\R^3)\otimes\C^N$. 
$P'$ here is a rank one projection that commutes with rotations. Hence $(P'\psi)=\ph\sprod{\ph}{\psi}$ with a spherically symmetric function $\ph$. 
It follows that $\EE^{HF}(P)=\EE^{H}(\ph,\ldots,\ph)$. It remains to
determine the condition on $Z$ for uniqueness of the Hartree-Fock minimizer in the present case where $q=N$. 
To this end we use the improved eigenvalue estimate from Proposition~\ref{eigenvalues} b). The gap condition becomes
\begin{equation} \label{gap condition}
\delta := \frac{3}{4} - 2\frac{N-1}{Z} > 0
\end{equation}
and the contraction condition reads
\begin{equation} \label{contraction condition}
\frac{4}{Z} \left( 1 + \sqrt{2N} \right) \sqrt{2N} < \delta.
\end{equation}
Both, (\ref{gap condition}) and (\ref{contraction condition}) are satisfied if
$$
Z > \frac{40}{3}N + \frac{16}{3} \sqrt{2N} - \frac{8}{3}.
$$
\end{proof}

\begin{corollary}
\label{orbital convergence}
Under the assumptions of Theorem~\ref{thm:uni-pro} let $P_Z$ be the unique solution provided by this theorem. Then
\begin{equation}
\label{projection-convergence}
\lim_{Z \to \infty} P_Z = \sum_{k=1}^{s} \chi_{\left\{-n_k^{-2}\right\}} (-\Delta-V).
\end{equation}
\end{corollary}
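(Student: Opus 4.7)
The approach I would take is to apply Proposition~\ref{lm-abstract} with $A = H_{P_Z}$ and $B = -\Delta - V$, using the Borel set
$\Omega_Z = \bigcup_{k=1}^{s}[-n_k^{-2},\,-n_k^{-2} + 4N/Z]$ that already appears in the proof of Theorem~\ref{thm:uni-pro}. The first thing to observe is that, for $Z$ large enough that $4N/Z < \Delta_s$, the only eigenvalues of $-\Delta-V$ lying in $\Omega_Z$ are the numbers $-n_k^{-2}$ themselves, so the limiting projection on the right-hand side of \eqref{projection-convergence} coincides with $\chi_{\Omega_Z}(-\Delta-V)$. On the other hand, Theorem~\ref{thm:uni-pro} gives $P_Z = \chi_{\Omega_Z}(H_{P_Z})$, so the abstract proposition directly controls the quantity of interest.

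The two gap conditions \eqref{gap-assumption} are verified exactly as in the proof of Theorem~\ref{thm:uni-pro}: combining Proposition~\ref{eigenvalues}(a) with the explicit spectrum of $-\Delta-V$ (whose eigenvalues below $0$ are $-n^{-2}$, $n\in\N$), both
$\dist(\sigma(H_{P_Z})\cap\Omega_Z,\,\sigma(-\Delta-V)\setminus\Omega_Z)$ and $\dist(\sigma(-\Delta-V)\cap\Omega_Z,\,\sigma(H_{P_Z})\setminus\Omega_Z)$
are bounded below by $\delta_Z := \Delta_s - 4N/Z$, which stays positive and in fact tends to $\Delta_s$ as $Z\to\infty$.

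For the norm bounds I would use $H_{P_Z} - (-\Delta-V) = Z^{-1}(U_{P_Z} - K_{P_Z})$ together with the operator bounds $\|U_P\psi\|\leq 4N\|\nabla\psi\|$ and $\|K_P\psi\|\leq 4\sqrt{N}\|\nabla\psi\|$, both of which follow from the argument of Lemma~\ref{H1-estimates} applied to $\rho_P$ and $\tau_P$ themselves (the differences in that lemma are not needed). For the kinetic energy factors one has the uniform estimate $\|\sqrt{-\Delta}\chi_{\Omega_Z}(H_{P_Z})\|_2^2\leq 4N$ established in the proof of Proposition~\ref{contraction}, and, by the virial theorem on each shell,
$$
\|\sqrt{-\Delta}\,\chi_{\Omega_Z}(-\Delta-V)\|_2^2 \;=\; \sum_{k=1}^{s} \frac{q n_k^2}{n_k^2} \;=\; qs.
$$
Putting these estimates together gives
$$
\|(H_{P_Z}-(-\Delta-V))\chi_{\Omega_Z}(H_{P_Z})\|_2 + \|(H_{P_Z}-(-\Delta-V))\chi_{\Omega_Z}(-\Delta-V)\|_2 \;=\; O(Z^{-1}),
$$
so Proposition~\ref{lm-abstract} yields $\|P_Z - P_\infty\|_2 \leq C\,\delta_Z^{-1} Z^{-1} \to 0$, which is even a quantitative statement.

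There is no serious obstacle: every ingredient—the uniform gap, the Hilbert-Schmidt bound on kinetic energies in the spectral subspace, and the $Z^{-1}$-smallness of the perturbation—has already been assembled for the proof of Theorem~\ref{thm:uni-pro}. The only point deserving a careful sentence is that $\delta_Z$ does not degenerate as $Z\to\infty$, which is immediate from Proposition~\ref{eigenvalues}(a) applied to the already known $P_Z$.
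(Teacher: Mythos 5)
Your proposal is correct and follows essentially the paper's own route: the paper likewise obtains \eqref{projection-convergence} by running the argument of Proposition~\ref{contraction} (i.e.\ Proposition~\ref{lm-abstract}) with $H_Q$ replaced by $H_\infty=-\Delta-V$, using the uniform gap from Proposition~\ref{eigenvalues} and the $Z^{-1}$-smallness of $U_{P_Z}-K_{P_Z}$, yielding $\|P_Z-P_\infty\|_2=O(Z^{-1})$. Your use of the virial theorem for $\|\sqrt{-\Delta}\,\chi_{\Omega_Z}(-\Delta-V)\|_2^2$ is only a cosmetic sharpening of the generic bound the paper employs.
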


\begin{proof}
Let $P_\infty$ denote the right hand side of
\eqref{projection-convergence}. A copy of the proof of Proposition~\ref{contraction} with $P=P_Z$ and $H_Q$ replaced by $H_\infty = -\Delta-V$ shows that
$$
\norm{P_Z-P_\infty}_2 \leq \frac{4}{\delta Z} \left( 1 + \sqrt{2N} \right) \sqrt{2N} \norm{P_Z}_2 = \frac{4\sqrt{2}}{\delta} \frac{N}{Z} \left( 1 + \sqrt{2N} \right) \to 0 \quad (Z\to\infty).
$$
Here, by Proposition~\ref{eigenvalues}, $\delta$ can be chosen
independently of $Z$ for $Z$ sufficiently large.
\end{proof}

%
\section{Restricted Hartree-Fock Theory}
\label{section:RHF}

In this section the Hartree-Fock functional is restricted to
one-particle orbitals of the special form  \eqref{special-form}. 
This will allow for all the electron configurations found in noble gas atoms.  
In this section we set $q=1$ to simplify the presentation.

By a shell index we mean a pair of integers $(n,\ell)$ with $n\geq 1$ and $0\leq \ell \leq n-1$.

\begin{theorem}\label{thm:rHF}
Let $(n_1,\ell_1),\ldots,(n_s,\ell_s)$ be given, pairwise distinct shell indices with
$n_1\leq n_2\leq \ldots \leq n_s$. Suppose $N=\sum_{j=1}^s(2\ell_j+1)$
and let $\Delta_s:=n_s^{-2}-(n_s+1)^{-2}$. 
\begin{itemize}
\item[(i)] If $Z>4N/\Delta_s $ then there exist normalized functions $f_1,\ldots,f_s\in L^2(\R_{+})$ such that the $N$ functions
$$
      \ph_{jm}(x):= \frac{1}{|x|}f_j(|x|)Y_{\ell_j m}(x),\qquad j=1,\ldots,s,\ m=-\ell_j\ldots \ell_j,
$$
solve the Hartree-Fock equations \eqref{HFeq} with eigenvalues $\eps_j$ satisfying 
\begin{equation}\label{ev-Rbounds}
     -\frac{1}{n_j^2} \leq  \eps_j\leq -\frac{1}{n_j^2} + \frac{4N}{Z}.
\end{equation}
\item[(ii)] If $Z$ satisfies \eqref{Z-bound} then the functions $f_j$ in (i) are unique up to global phases.
 \end{itemize}
\end{theorem}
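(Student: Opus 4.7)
The plan is to adapt the fixed-point strategy of Section~3 to the rotation-invariant setting forced by \eqref{special-form}, using the Schauder--Tychonoff theorem for part~(i) (which only requires the bound $Z > 4N/\Delta_s$) and the contraction argument of Proposition~\ref{contraction} for part~(ii). After the rescaling $\ph_i(\mathbf{x}) \mapsto Z^{3/2}\ph_i(Z\mathbf{x})$ used in Section~3, the task is to find a trace-class projection $P$ of a specific rotation-invariant type satisfying $P = F(P)$ for a map $F$ defined below. For each $\ell \geq 0$ write $\mathcal{H}_\ell \cong L^2(\R_+, dr) \otimes \mathcal{Y}_\ell$ with $\mathcal{Y}_\ell := \operatorname{span}\{Y_{\ell m} : |m| \leq \ell\}$, and let $k_\ell := \#\{j : \ell_j = \ell\}$. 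Let $\mathcal{M}$ be the convex set of $P \in S_{N,1}$ that commute with every rotation $U(R)$; by Schur's lemma such $P$ decompose as $P = \bigoplus_\ell R^{(\ell)} \otimes I_{\mathcal{Y}_\ell}$ where each $R^{(\ell)}$ is a positive trace-class operator on $L^2(\R_+)$ of norm $\leq 1$ and trace $k_\ell$. For each $\ell$ with $k_\ell > 0$, set $\Omega_\ell := \bigcup_{j : \ell_j = \ell} [-n_j^{-2}, -n_j^{-2}+4N/Z]$ and put
$$
F(P) := \bigoplus_\ell \chi_{\Omega_\ell}\!\bigl(H_P^{(\ell)}\bigr) \otimes I_{\mathcal{Y}_\ell},
$$
where $H_P^{(\ell)}$ is the reduction of $H_P$ to the radial factor (well-defined because rotation-invariance of $P$ makes $H_P$ rotation-invariant). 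Proposition~\ref{eigenvalues} together with $Z > 4N/\Delta_s$ ensures that on each sector the radial operator $H_P^{(\ell)}$ has precisely one simple eigenvalue in each component of $\Omega_\ell$, so $F(P) \in \mathcal{M}$; the radial functions $f_j$ in the theorem are read off as the eigenfunctions of $H_P^{(\ell_j)}$ with eigenvalue near $-n_j^{-2}$, and \eqref{ev-Rbounds} follows immediately from Proposition~\ref{eigenvalues}.

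For part~(i), I would invoke Schauder--Tychonoff on a suitable $F$-invariant convex compact subset of $\mathcal{M}$. Continuity of $F$ in the Hilbert--Schmidt topology follows from norm-resolvent continuity of $P \mapsto H_P$ together with the uniform spectral-gap estimate. The natural candidate for a compact invariant set is the closed convex hull of $F(\mathcal{M})$: every orbital in the range of $F(P)$ is an eigenfunction of $H_P$ with negative eigenvalue bounded away from $0$, hence satisfies a uniform Agmon-type exponential decay estimate, while its $H^1$-norm is controlled via Lemma~\ref{form-estimates}(iv). The combination of uniform exponential decay with uniform $H^1$ bounds renders the family of integral kernels pre-compact in $L^2(\R^6)$, which is precisely Hilbert--Schmidt pre-compactness, and Schauder--Tychonoff then yields a fixed point. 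For part~(ii), the bound \eqref{Z-bound} places us in the regime of Proposition~\ref{contraction}: working sector by sector via Proposition~\ref{lm-abstract} applied to each pair $H_P^{(\ell)}, H_Q^{(\ell)}$ and summing, one obtains the same contraction estimate on $\mathcal{M}$ with respect to the Hilbert--Schmidt metric, so the Banach fixed-point theorem gives uniqueness of $P$. Since the eigenvalues of each $H_P^{(\ell)}$ lying in $\Omega_\ell$ are simple, the radial functions $f_j$ are then determined up to a phase.

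The main obstacle is the compactness step for Schauder--Tychonoff: one must show that $F(\mathcal{M})$ does not escape to spatial infinity uniformly in $P \in \mathcal{M}$, even though $\mathcal{M}$ is infinite-dimensional and has no compactness in Hilbert--Schmidt norm. The key input is a uniform Agmon bound for eigenfunctions of $H_P$ associated with eigenvalues bounded away from $0$; deriving this uniformly over $P \in \mathcal{M}$ from the $L^1$-bound on $\rho_P$ and the Kato-type control of $U_P - K_P$ from Lemma~\ref{form-estimates} is the delicate point.
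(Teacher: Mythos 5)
Your skeleton matches the paper's: the paper also solves the fixed-point problem $P=F(P)$ with $F(P)=\sum_{j=1}^s\chi_{\Omega_j}(H_P)\pi_{\ell_j}$ on the rotation-invariant set $S_N^{sym}$, uses Schauder--Tychonoff for part (i) under the weaker bound $Z>4N/\Delta_s$, and gets part (ii) from the same contraction estimate as in Proposition~\ref{contraction} (the paper's Lemma~\ref{r-contraction}), with uniqueness of the $f_j$ up to phases coming from simplicity of the relevant radial eigenvalues. However, the decisive step of part (i) --- compactness of $F$ --- is exactly the point you flag as ``delicate'' and do not prove, so the proposal has a genuine gap where the paper has its main technical work. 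The paper does \emph{not} prove a uniform Agmon-type exponential bound; it proves much less and that suffices: taking any sequence $P_n\in S_N^{sym}$, it writes $F(P_n)$ via eigenfunctions $\ph_k^{(n)}$ of $H_{P_n}$ with eigenvalues below $-(n_s+1)^{-2}$, gets uniform $H^1$ bounds (hence local $L^2$ convergence along a subsequence), and then establishes the uniform-in-$n$ tail estimate $\|\chi_R\ph_k^{(n)}\|^2=O(R^{-1})$ by an IMS localization identity, expanding the double commutator $[[K_{P_n},\chi_R],\chi_R]$ and controlling it through $|\tau^{(n)}(x,y)|\leq\rho^{(n)}(x)^{1/2}\rho^{(n)}(y)^{1/2}$, Cauchy--Schwarz, the spherical symmetry of $\rho^{(n)}$ and Newton's theorem (replacing $|x-y|^{-1}$ by $\min(|x|^{-1},|y|^{-1})$). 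Your proposed route would instead require pointwise exponential decay for eigenfunctions of the \emph{nonlocal} operator $H_P$, uniformly over all $P\in\mathcal M$; because of the exchange term $K_P$ the standard Agmon machinery does not apply directly, and making it uniform in $P$ from only $\|\rho_P\|_1=N$ is a nontrivial task that you neither carry out nor reduce to a cited result. Until that (or some substitute like the paper's $O(R^{-1})$ bound) is supplied, the existence proof is incomplete.

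Two smaller points. First, continuity of $F$ should be established in the topology in which the Schauder--Tychonoff argument runs; the paper proves a quantitative trace-norm Lipschitz bound \eqref{r-cont1}, whereas an appeal to ``norm-resolvent continuity of $P\mapsto H_P$'' is vaguer than what is needed and, in any case, must be made quantitative across the spectral gap. Second, the Hilbert--Schmidt contraction estimate \eqref{r-cont2} is only valid on $S_N^{sym}\cap\{P^2=P\}$ (it uses $\|P-Q\|_1\leq\sqrt{2N}\|P-Q\|_2$ for rank-$N$ projections), not on all of $\mathcal M$ as you state; this is harmless for uniqueness, since fixed points are projections, but the statement should be restricted accordingly.
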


\emph{Remark:} A result similar to part (i) of this theorem with the
weaker assumption $Z\geq N$ is described in
Section~III.3 of \cite{Lions1987}. However, Lions' argument is based
on the unproven assertion that all eigenvalues of a radial Hartree-Fock
operator are simple.

\medskip
To prove Theorem~\ref{thm:rHF} we solve the fixed point equation 
\begin{equation}\label{fp-sym}
     P=F(P):= \sum_{j=1}^s \chi_{\Omega_j}(H_P)\pi_{\ell_j}
\end{equation}
where $\pi_{\ell}$ denotes the orthogonal projection associated with
the eigenvalue $\ell(\ell+1)$ of the square of the total angular
momentum operator, and 
$$
     \Omega_{j}:=\Big[-\frac{1}{n_j^2},-\frac{1}{n_j^2}+\frac{4N}{Z}\Big].
$$
The spherical symmetry will be imposed by restricting $F$ to the subset 
$$
      S_N^{sym}:= \big\{P\in S_{N,1} \big| U(R)PU(R)^{*}=P,\ \text{for all}\ R\in SO(3)\big\}
$$
of $S_{N,1}$. To find a solution of \eqref{fp-sym} we use the Schauder-Tychonoff theorem. Its uniqueness will 
follow from the contraction principle.

\begin{lemma}\label{r-contraction}
Suppose the hypotheses of Theorem~\ref{thm:rHF} are satisfied and that
$\delta:=\Delta_s-4N/Z>0$. Then $F(S_N^{sym})\subset
S_N^{sym}$ and for all $P,Q\in  S_N^{sym}$,
\begin{equation}
   \|F(P)-F(Q)\|_1 \leq 32\frac{N}{\delta Z}\|P-Q\|_1.\label{r-cont1}
\end{equation}
If $P,Q\in  S_N^{sym}\cap\{P^2=P\}$ then
\begin{equation}
   \|F(P)-F(Q)\|_2 \leq \frac{8}{\delta
     Z}\sqrt{2N}(1+\sqrt{2N})\|P-Q\|_2.\label{r-cont2}
\end{equation}
\end{lemma}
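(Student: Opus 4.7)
\emph{Plan.} My approach mirrors the proof of Proposition~\ref{contraction}, with one crucial addition: the projections onto angular-momentum subspaces $\pi_\ell$ must be inserted, and all estimates carried out sector-by-sector before being summed over $\ell$. To verify $F(S_N^{sym})\subset S_N^{sym}$, rotation invariance of $F(P)$ is inherited from that of $H_P$ (since $V$ is radial and $P$ is rotation-invariant) together with the manifest rotation invariance of each $\pi_\ell$. The summands $\chi_{\Omega_j}(H_P)\pi_{\ell_j}$ are mutually orthogonal projections: when $\ell_j\neq\ell_k$ the angular projections are orthogonal, and when $\ell_j=\ell_k$ the distinctness of the shell indices forces $n_j\neq n_k$ so that $\Omega_j\cap\Omega_k=\emptyset$ (since $\delta>0$), making the spectral projections orthogonal. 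To check $\tr F(P)=N$ I restrict Proposition~\ref{eigenvalues} to the invariant subspace $\pi_{\ell_j}L^2(\R^3)$; the operator inequalities of Lemma~\ref{form-estimates} carry over to invariant subspaces, so $H_P\rst\pi_{\ell_j}L^2(\R^3)$ has exactly one eigenvalue (of multiplicity $2\ell_j+1$) in $\Omega_j$, whence $\tr F(P)=\sum_j(2\ell_j+1)=N$.

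Next, since $H_P$ and $H_Q$ both commute with every $\pi_\ell$, the difference $F(P)-F(Q)$ is block-diagonal along $L^2(\R^3)=\bigoplus_\ell \pi_\ell L^2(\R^3)$, the block on sector $\ell$ being $\chi_{\Omega^{(\ell)}}(H_P\rst\pi_\ell L^2)-\chi_{\Omega^{(\ell)}}(H_Q\rst\pi_\ell L^2)$ with $\Omega^{(\ell)}:=\bigcup_{j:\ell_j=\ell}\Omega_j$, a disjoint union. The same eigenvalue analysis as above, together with the inequality $n_j^{-2}-(n_j+1)^{-2}\geq\Delta_s$, yields
\begin{equation*}
  \dist\bigl(\Omega^{(\ell)},\,\sigma(H_P\rst\pi_\ell L^2)\setminus\Omega^{(\ell)}\bigr)\geq\delta
\end{equation*}
and analogously for $Q$, so Proposition~\ref{lm-abstract} applies sector-by-sector.

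Then I estimate $\|(H_P-H_Q)\chi_{\Omega^{(\ell)}}(H_P\rst\pi_\ell L^2)\|_2$ via Lemma~\ref{H1-estimates}, which bounds $\|(U_P-U_Q)\ph\|$ by $4\|P-Q\|_1\|\nabla\ph\|$ and $\|(K_P-K_Q)\ph\|$ by $4\|P-Q\|_2\|\nabla\ph\|$. Combining with the bound $\|\sqrt{-\Delta}\chi_{\Omega^{(\ell)}}(H_P\rst\pi_\ell L^2)\|_2^2\leq 4r_\ell$ coming from Lemma~\ref{form-estimates}(iv)---where $r_\ell$ denotes the rank on sector $\ell$---produces a sector-wise 2-norm bound proportional to $\sqrt{r_\ell}\|P-Q\|_1$ for \eqref{r-cont1}, and one proportional to $(1+\sqrt{2N})\sqrt{r_\ell}\|P-Q\|_2$ for \eqref{r-cont2} (the latter using $\|P-Q\|_1\leq\sqrt{2N}\|P-Q\|_2$ for two rank-$N$ projections). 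For the trace-norm claim \eqref{r-cont1} I pass to the $1$-norm on each block via $\|\cdot\|_1\leq\sqrt{2r_\ell}\|\cdot\|_2$ and sum, using $\sum_\ell r_\ell=\tr F(P)=N$ to produce the factor $N$; for \eqref{r-cont2} I sum the squares of the sector-wise $2$-norms, and the same identity $\sum_\ell r_\ell=N$ delivers the factor $\sqrt{2N}$.

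The main obstacle will be the bookkeeping at the gap step: one must confirm that Proposition~\ref{eigenvalues} continues to apply verbatim on every invariant subspace $\pi_\ell L^2(\R^3)$, and that the intervals $\Omega_j$ with $\ell_j=\ell$ are simultaneously wide enough to capture the perturbed eigenvalues and narrow enough to remain disjoint and separated from the rest of $\sigma(H_P\rst\pi_\ell L^2)$ by $\delta$. Once these sectorized eigenvalue estimates are in place, the identity $\sum_\ell r_\ell=N$ automatically produces the stated constants.
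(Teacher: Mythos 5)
Your proposal is correct and takes essentially the same route as the paper: rotation invariance and the mutual orthogonality of the summands $\chi_{\Omega_j}(H_P)\pi_{\ell_j}$ give $F(S_N^{sym})\subset S_N^{sym}$, the sectorized eigenvalue estimates give the gap $\delta$ and the trace count, and Proposition~\ref{lm-abstract} combined with Lemma~\ref{H1-estimates}, Lemma~\ref{form-estimates}(iv) and $\sum_\ell r_\ell=N$ reproduces the stated constants. The only cosmetic difference is that you organize the argument as a block-diagonal decomposition over angular-momentum sectors with per-block trace-norm bookkeeping, whereas the paper uses the relation $F_j(P)F_k(P)=\delta_{jk}F_j(P)$ directly inside the trace identity and converts to the trace norm globally via the rank-$2N$ bound; both yield identical estimates.
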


\emph{Remark.} The Lipshitz constant in \eqref{r-cont2}
agrees with the one in Proposition~\ref{contraction}.

\begin{proof}
We first show that $F( S_N^{sym})\subset  S_N^{sym}$. For $P\in S_N^{sym}$, the Fock operator $H_P$ commutes with all rotations $U(R)$  
and hence so does its spectral projection 
$\chi_{\Omega_j}(H_P)$. It follows that $F_j(P):=\chi_{\Omega_j}(H_P)\pi_{\ell_j}=\pi_{\ell_j}\chi_{\Omega_j}(H_P)$ is an orthogonal projection and that 
\begin{equation}\label{ons}
    F_j(P)F_k(P) = \delta_{jk}F_{j}(P) 
\end{equation}
because $(n_j,\ell_j)\neq (n_k,\ell_k)$ for $j\neq k$. Hence $F(P)=\sum_{j=1}^s F_{j}(P)$ is an orthogonal projection that commutes with all rotations. 
As for the trace of $F(P)$, we note that $H_P\upharpoonright
\pi_{\ell_j}\HH$ has exactly $2\ell_j+1$ eigenvalues in $\Omega_j$, counted with multiplicities. 
This shows that $\tr F(P)=\sum_{j=1}^s \tr F_j(P) =\sum_{j=1}^s (2\ell_j+1)=N$. 

By inspection of the proof of Proposition~\ref{lm-abstract}, using \eqref{ons} we see that 
\begin{equation}\label{rc1}
    \|F(P)-F(Q)\|_2^2 = \tr F(P)(1-F(Q))F(P) + \tr F(Q)(1-F(P))F(Q)
\end{equation}
where 
\begin{align}
    \tr F(P)(1-F(Q))F(P) &= \sum_{j=1}^s \tr F_j(P)(1-F(Q))F_j(P)\nonumber \\
     &\leq \delta^{-2}\sum_{j=1}^s \|(H_P-H_Q)F_j(P)\|_2^2.\label{rc2}
\end{align}
Here it is important that $H_P$ and $H_Q$ are considered as operators on $\pi_{\ell_j}L^2(\R^3)$, 
which have $2\ell_j+1$ eigenvalues in $\Omega_j$ separated by a gap of size $\delta$ from the rest of the spectrum. 
To bound $\|(H_P-H_Q)F_j(P)\|_2$ we use estimates from the proof of
Proposition~\ref{contraction} as well as $\|P-Q\|_2\leq \|P-Q\|_1$. We find
\begin{equation}\label{rc3}
     \|(H_P-H_Q)F_j(P)\|_2 \leq \frac{16}{Z} \sqrt{2\ell_j+1}\|P-Q\|_1.
\end{equation}
The proof of \eqref{r-cont1} is now completed by combining 
\eqref{rc1}, \eqref{rc2}, \eqref{rc3} with 
\begin{equation}\label{rc4}
    \|F(P)-F(Q)\|_1 \leq \sqrt{2N} \|F(P)-F(Q)\|_2
\end{equation}
and with estimates similar to \eqref{rc2}, \eqref{rc3} where the roles
of $P$ and $Q$ are interchanged. \eqref{rc4} follows from the fact
that $F(P)$ and $F(Q)$ have rank $N$. For proving \eqref{r-cont2} we
use  $\|P-Q\|_1\leq \sqrt{2N}\|P-Q\|_2$ instead of \eqref{rc4}. The
rest is similar to the proof of \eqref{r-cont1}.
\end{proof}

\begin{proof}[\textbf{Proof of Theorem~\ref{thm:rHF}}]
To prove (i) we check that the map $F:S_N^{sym}\to S_N^{sym}$ satisfies the hypotheses of the Schauder-Tychonoff theorem.
By construction, $S_N^{sym}$ is convex and closed in
$\TT_1(L^2(\R^3))$, and $F:S_N^{sym}\to S_N^{sym}$ is continuous by Lemma~\ref{r-contraction}.
To prove compactness of the map $F$ let $(P_n)$ be any sequence in $ S_N^{sym}$. Then $F(P_n)$ may be written in the form
$$
  F(P_n)=\sum_{k=1}^N |\ph_k^{(n)}\rangle\langle\ph_k^{(n)}|,\qquad H_{P_n}\ph_k^{(n)} = \eps_k^{(n)}\ph_k^{(n)}
$$
with eigenvalues $\eps_k^{(n)} <-(n_s+1)^{-2}$ for all $k,n$. By Lemma~\ref{form-estimates}, the sequences $(\ph_k^{(n)})_n$ are bounded in $H^1(\R^3)$.
We may assume they are weakly convergent in $H^1$ and hence locally convergent after   
passing to a subsequence of $(P_n)$. Hence $F(P_n)$ will be convergent 
provided that $\ph_k^{(n)}(x)\to 0$ as $|x|\to\infty$ uniformly in $n$. 
To prove this uniform decay we pick a function $\chi\in C^{\infty}(\R;[0,1])$ with $\chi(t)=0$ 
for $t\leq 1$ and $\chi(t)=1$ for $t\geq 2$. Let $\chi_R(x)=\chi(|x|/R)$ for $R>0$. The IMS-formula tells us that 
$$
   2\chi_R(H_{P_n}-\eps_k^{(n)})\chi_R = \chi_R^2(H_{P_n}-\eps_k^{(n)}) + (H_{P_n}-\eps_k^{(n)})\chi_R^2 - [[H_{P_n},\chi_R],\chi_R],
$$
where $ [[H_{P_n},\chi_R],\chi_R]= 2|\nabla\chi_R|^2 - Z^{-1}[[K_{P_n},\chi_R],\chi_R]$ and 
$$
    \chi_R(H_{P_n}-\eps_k^{(n)})\chi_R \geq (-2R^{-1}-\eps_k^{(n)})\chi_R^2 \geq c \chi_R^2
$$
for some $c>0$ if $R\geq R_0$ and $R_0$ is large enough. It follows that 
\begin{align*}
   c\|\chi_R \ph_k^{(n)}\|^2 &\leq \frac{1}{2Z}\Big|\sprod{\ph_k^{(n)}}{[\chi_R,[\chi_R,K_{P_n}]]\ph_k^{(n)}}\Big| + O(R^{-2})\\
   &= O(R^{-1}) ,\qquad (R\to\infty),
\end{align*}
uniformly in $n$.
In the last step we expanded the double commutator, we used $|\tau^{(n)}(x,y)|\leq \rho^{(n)}(x)^{1/2}\rho^{(n)}(y)^{1/2}$, 
Cauchy-Schwarz, the spherical symmetry of $\rho^{(n)}$ and 
Newton's theorem to replace $|x-y|^{-1}$ by $\min
(|x|^{-1},|y|^{-1})$. Here $\tau^{(n)}$ and $\rho^{(n)}$ denote the
kernel and density of $P_n$. Statement (ii) in Theorem~\ref{thm:rHF}
follows from \eqref{r-cont2} and the remark following Lemma~\ref{r-contraction}.
\end{proof}


\section{Appendix}

The following lemma seems to be well-known, but we have not found it in the literature.

\begin{lemma}
\label{trace-formula}
There exists a bounded linear mapping
\begin{eqnarray*}
\rho : \TT_1(L^2(\R^n)) & \to & L^1(\R^n) \\
K & \mapsto & \rho_K
\end{eqnarray*}
which is uniquely determined by
$$
|\psi \rangle \langle \ph | \mapsto \psi(x)\overline{\ph(x)}.
$$
Furthermore, $\rho$ has the following properties:
\begin{itemize}
\item[(i)] $\int\rho_K(x) \,dx = \tr(K)$ 
\item[(ii)] $\int |\rho_K(x)| \,dx \leq \tr|K|$
\end{itemize}
\end{lemma}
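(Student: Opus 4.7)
The plan is to construct $\rho$ first on the dense subspace of finite-rank operators and then extend to all of $\TT_1$ by continuity. For any finite sum $K=\sum_{k=1}^{M}|\psi_k\rangle\langle\ph_k|$ with $\psi_k,\ph_k\in L^2(\R^n)$ I would set $\rho_K(x):=\sum_{k=1}^{M}\psi_k(x)\overline{\ph_k(x)}$. The first task is to verify that this pointwise formula is independent of the chosen representation of $K$. For that I would use the characterization
\begin{equation*}
   \int f(x)\,\rho_K(x)\,dx = \tr(M_f K),\qquad f\in L^{\infty}(\R^n),
\end{equation*}
where $M_f$ denotes multiplication by $f$. On a rank-one operator $K=|\psi\rangle\langle\ph|$ this is the direct computation $\tr(M_f K)=\sprod{\ph}{f\psi}=\int f\psi\bar{\ph}$, and by linearity it extends to every finite-rank $K$. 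Since the right-hand side depends only on $K$ and not on its decomposition, the same is true of $\rho_K$.

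Next I would establish the key $L^1$-bound. Using the canonical Schmidt (singular value) decomposition $K=\sum_{n\geq 1} s_n|\psi_n\rangle\langle\ph_n|$ with $s_n\geq 0$, orthonormal systems $(\psi_n),(\ph_n)$, and $\sum_n s_n=\tr|K|$, the Cauchy-Schwarz inequality gives
\begin{equation*}
 \sum_n\int |s_n\psi_n(x)\overline{\ph_n(x)}|\,dx \;\leq\; \sum_n s_n\|\psi_n\|_2\|\ph_n\|_2 \;=\; \sum_n s_n \;=\; \tr|K|.
\end{equation*}
Thus the partial sums $\sum_{n\leq M}s_n\psi_n\overline{\ph_n}$ converge absolutely in $L^1(\R^n)$ to a function $\rho_K$ with $\|\rho_K\|_{L^1}\leq\tr|K|$, which is exactly property (ii). Because the finite-rank operators are dense in $\TT_1$ and $K\mapsto\rho_K$ is contractive on them with respect to the trace and $L^1$ norms, it extends uniquely to a bounded linear map $\TT_1(L^2(\R^n))\to L^1(\R^n)$ of norm at most one.

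Property (i) is then immediate by choosing $f\equiv 1$ in the characterization above for finite-rank $K$ and passing to the limit in trace norm, since both $K\mapsto\tr K$ and $K\mapsto\int\rho_K$ are continuous $\TT_1\to\C$. The uniqueness statement of the lemma follows in the same way: any bounded linear map $\tilde\rho:\TT_1\to L^1$ satisfying $|\psi\rangle\langle\ph|\mapsto\psi\bar{\ph}$ must, by linearity, agree with the construction above on finite-rank operators, and hence, by density and boundedness, on all of $\TT_1$.

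The only nontrivial point in this scheme is the representation-independence of $\rho_K$ on finite-rank operators; once it is dispatched via the dual pairing with $L^{\infty}$, everything reduces to a standard density-and-continuity argument. It is worth emphasizing that the naive recipe of ``restricting the integral kernel of $K$ to the diagonal'' does not work in general, since a trace-class $K$ need not have a jointly measurable kernel with well-defined diagonal values; the construction therefore has to proceed through the spectral data of $K$ as above.
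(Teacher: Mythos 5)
Your proof is correct, but the construction is genuinely different from the one in the paper. You define $\rho_K$ by the explicit formula $\sum_k\psi_k\overline{\ph_k}$ on the dense subspace of finite-rank operators, dispose of the representation-independence via the dual pairing $\int f\rho_K\,dx=\tr(M_fK)$ for $f\in L^\infty$ (which is the right way to do it, since $L^\infty$ separates points of $L^1$), obtain the contractivity bound from the singular value decomposition, and extend by density. The paper instead works with a mollified diagonal: it sets $\rho_{K,\eps}(x):=\sprod{J_{\eps,x}}{KJ_{\eps,x}}$ for an approximate identity $J_\eps$ and defines $\rho_K$ as the $L^1$-limit as $\eps\to 0$; the uniform bound $\|\rho_{K,\eps}\|_1\leq\tr|K|$ again comes from the singular value decomposition, and the existence of the limit is reduced to the rank-one case, where it is the statement $(J_\eps*\ph)\overline{(J_\eps*\psi)}\to\ph\overline{\psi}$ in $L^1$. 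The two routes buy slightly different things: the mollification definition is manifestly intrinsic to $K$ (no representation to worry about, and it literally realizes the ``regularized diagonal of the kernel'' that your closing remark correctly identifies as the naive but ill-defined recipe), at the cost of an extra limiting argument; your route avoids mollifiers entirely but must pay for well-definedness with the duality argument. Both proofs share the essential estimate --- Cauchy--Schwarz applied to the Schmidt decomposition, giving $\|\rho_K\|_1\leq\sum_n s_n=\tr|K|$ --- and both derive (i) and the uniqueness statement by linearity, density and continuity, so the analytic content is the same.
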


\begin{proof}
The map $\rho$ is unique because, by linearity and continuity, it is
completely determined by its action on operators of rank one. For the proof of existence 
let $J\in C_0^\infty(\R^n)$ be a real, non-negative and even function
with $\int J(x) \,dx = 1$. Let $J_\eps(y) := \eps^{-n}J(y/\eps)$ and $J_{\eps,x}(y) := J_\eps(y-x)$. Then
\begin{equation}\label{rho0}
\sprod{J_{\eps,x}}{\ph} = \int J_\eps(x-y) \ph(y) \,dy = (J_\eps * \ph)(x).
\end{equation}
For given $K\in \TT_1(L^2(\R^n))$ we define $\rho_K\in L^1(\R^n)$ by 
\begin{align}
    \rho_K &:=L^1-\lim_{\eps\to 0} \rho_{K,\eps} \label{rho1}\\
    \rho_{K,\eps}(x) &:= \sprod{J_{\eps,x}}{KJ_{\eps,x}}.\nonumber
\end{align}
We claim that the limit \eqref{rho1} exists and that 
\begin{align}
   \rho_K(x) &= \sum_{n\geq 0} \lambda_n \ph_n(x)\overline{\psi_n(x)}\nonumber\\
   \text{if}\quad K &= \sum_{n\geq 0} \lambda_n |\ph_n\rangle \langle \psi_n|\label{rho2}
\end{align}
denotes the singular value decomposition of $K$, that is $\lambda_n
\geq 0$ and $\ph_n, \psi_n$ are orthonormal bases. Using \eqref{rho0},
\eqref{rho2}, $\|J_{\eps}*\ph\|\leq \|J_{\eps}\|_1\|\ph\| = \|\ph\|$
and $\sum_n\lambda_n=\tr|K|$, we see that
\begin{equation}\label{rho3}
   \|\rho_{K,\eps}\|_1 \leq \tr|K|, \qquad \text{for all}\ \eps>0.
\end{equation}
Hence if the limit \eqref{rho1} exists, then (ii) holds and $K\mapsto
\rho_K$ is a continuous linear map. Moreover, if $K_N$ denotes the
$N$-th partial sum of \eqref{rho2}, then, by \eqref{rho3},
$$
   \|\rho_{(K-K_N),\eps}\|_1 \leq \tr |K-K_N| \to 0,\qquad (N\to \infty),
$$
uniformly in $\eps>0$. Therefore it suffices to prove the existence of $\rho_{K_N}$, which follows if we prove that 
$\rho$ maps the rank one operator $|\ph\rangle\langle\psi|$ to $\ph(x)\overline{\psi(x)}$. Indeed,
\begin{eqnarray*}
\lefteqn{\norm{(J_\eps * \ph)\overline{(J_\eps * \psi)}-\ph\overline{\psi}}_1} \\
& = & \norm{(J_\eps * \ph- \ph)\overline{\psi} + (J_\eps * \ph)\left( \overline{(J_\eps * \psi)}- \overline{\psi} \right)}_1 \\
& \leq & \norm{J_\eps * \ph- \ph} \norm{\psi} + \norm{J_\eps * \ph} \norm{J_\eps * \psi - \psi} \to 0 \quad (\eps \to 0)
\end{eqnarray*}
where we used that $J_\eps * \ph \to \ph$ in $L^2(\R^n)$. It now remains to prove (i). This follows from 
$$
\int \rho_K(x) \,dx = \sum_{n\geq 0} \lambda_n \int \ph_n(x) \overline{\psi_n(x)} \,dx = \sum_{n\geq 0} \lambda_n \sprod{\psi_n}{\ph_n}
$$
$$
\tr K = \sum_{n\geq 0} \sprod{\psi_n}{K \psi_n} = \sum_{n\geq 0} \lambda_n \sprod{\psi_n}{\ph_n}. 
$$

\end{proof}


\end{document}